\documentclass{article}
\usepackage{graphicx} 
\usepackage{amsmath, amsfonts, amsthm, xcolor, comment}
\newtheorem{theorem}{Theorem}
\newtheorem{corollary}{Corollary}
\usepackage[margin=1in]{geometry}
\usepackage[colorinlistoftodos]{todonotes}
\usepackage{subcaption}
\usepackage{float} 
\usepackage[colorlinks=true, citecolor=blue, linkcolor=blue, urlcolor=blue]{hyperref}
\usepackage[square,numbers,sort&compress]{natbib}
\usepackage{authblk}

\title{Quantum Homotopy Perturbation Method to Solve Nonlinear Partial Differential Equations}
\author[1]{Jungin E. Kim} 
\author[1]{Eunsik Choi}
\author[1*]{Yan Wang}

\affil[1]{George W. Woodruff School of Mechanical Engineering, Georgia Institute of Technology, 801 Ferst Dr NW, Atlanta, GA 30318}
\affil[ ]{*Corresponding author: yan-wang@gatech.edu}

\date{\today}

\begin{document}

\maketitle

\section*{Abstract}
Solving nonlinear partial differential equations (PDEs) is important in various scientific and engineering applications. Recently, quantum computing was introduced as an alternative computational paradigm for solving nonlinear PDEs. In this paper, a new method called the quantum homotopy perturbation method is proposed to improve the scalability of solving nonlinear PDEs through two aspects. First, the dimension of the Hilbert space remains the same after the nonlinear PDE is linearized through the homotopy perturbation.  Second, the solutions are obtained with a variational quantum simulation framework, where the number of qubits is decreased with functional encoding and the depth of parametrized circuits is reduced. The additional contribution of this paper is the introduction of new criteria for selecting the homotopy series truncation order and circuit depth for cost-effective QHPM.
The proposed approach is demonstrated with several examples, including the vorticity transport equation and the reduced magnetohydrodynamics equations.

\vspace{3pt}

\noindent\textbf{Keywords:} quantum scientific computing, nonlinear differential equations, homotopy perturbation method, variational quantum simulation

\section{Introduction}

Solving nonlinear partial differential equations (PDEs) is important in various science and engineering applications such as fluid dynamics, vibration analysis, magnetohydrodynamics, and phase transitions. 
The nonlinearity in the PDEs makes it challenging to find the analytical forms of solutions. Numerical methods are mostly required. 
The difficulty of solving a nonlinear PDE can be reduced by converting the original problem to a linear one. 
In the discretization approach such as finite-difference method (FDM), the PDE is numerically approximated as a system of linear equations.
Increasing the resolution of discretizations improves the accuracy of the approximated solution. Some problems such as turbulence flow simulation require very high resolution. 
Solving nonlinear PDEs faces the major scalability challenge. 

Recently, quantum computers have been utilized in various ways to solve nonlinear PDEs. To be compatible with the quantum computing operations, the original nonlinear PDEs are either discretized into linear equations or converted to linear ordinary differential equations (ODEs). 
The converted linear equations can be solved with quantum linear equation solvers such as the Harrow-Hassidim-Lloyd algorithm \cite{Harrow2009}, matrix inversion based on linear combination of unitaries \cite{childs2017quantum}, quantum singular value transformation \cite{gilyen2019quantum, martyn2021grand}, and variational quantum linear solver \cite{bravo2019variational}. 
Although quantum linear equation solvers show the promise of exponentially reduced spatial complexity in comparison with classical methods, such advantage diminishes as the condition number increases in real-world applications.   Several quantum algorithms rely on linearization approaches such as Koopman-von Neumann linearization \cite{joseph2020koopman} and Carleman linearization \cite{liu2021efficient}. In these linearization approaches, a nonlinear PDE defined in a finite-dimensional state space is transformed into linear PDEs defined in an infinite-dimensional Hilbert space.
The linear PDEs are solved with quantum algorithms such as linear combination of unitaries \cite{berry2017quantum, childs2021high}, qubitization \cite{low2019hamiltonian}, Schr\"odingerization \cite{jin2022quantumschrodingerization}, moment-matching dilation \cite{li2026linear}, and variational quantum algorithms \cite{liu2021variational}. Hamiltonian simulation methods that are targeted for fault-tolerant quantum computing require deep circuits. Variational quantum algorithms for near-term computers usually require a large number of parameters to be optimized, which suffers from the barren plateau issue.  A third approach to linearize nonlinear PDEs includes the homotopy perturbation \cite{xue2021quantum} and homotopy analysis methods \cite{xue2025quantum, choi2026lindbladian}, where the solution is approximated as a power series so that the nonlinear problem is converted to a system of recursive linear PDEs known as linear deformation equations. 
A fourth approach to solve nonlinear PDEs specifically for fluid dynamics simulation is quantum lattice Boltzmann method \cite{yepez2001quantum,yepez2002quantum,succi2015quantum,itani2024quantum}, where the original Navier-Stokes equations is reformulated as the lattice Boltzmann equation and the dynamics of quantum states models the evolution of particle densities.

In this paper, a new nonlinear PDE solver called the quantum homotopy perturbation method (QHPM) is proposed. The QHPM improves the scalability of solving nonlinear PDEs through two aspects. First, the dimension of the Hilbert space remains the same after the nonlinear PDE is linearized through the homotopy perturbation.  Second, the solutions are obtained with a recently developed variational quantum simulation (VQS) framework \cite{sul2024quantum, sul2025generic}, where the number of qubits is decreased with functional encoding and the depth of parametrized circuits is reduced.  With scalability improvements, QHPM is practical for solving nonlinear PDEs on current quantum computers. The additional contribution of this paper is the introduction of new criteria for selecting the homotopy series truncation order and circuit depth for cost-effective QHPM.  The first criterion is based on the rigorous analysis outcome that the required truncation order increases logarithmically as the targeted approximation error decreases. The second one is based on a new theorem that the circuit depth increases logarithmically with respect to the determinant of the Fubini-Study metric.


The remainder of the paper is structured as follows. An overview of existing quantum nonlinear differential equation solvers is provided in Section \ref{sec:quantumnonlindesolvers}. The proposed QHPM is described in Section \ref{sec:proposedqhpm}. Details about the VQS framework for solving the linear deformation equations are provided in Section \ref{sec:VQS}. In Section \ref{sec:results}, QHPM is demonstrated with two examples of nonlinear PDEs, including the vorticity transport equation and reduced magnetohydrodynamics. Conclusions and future work are discussed in Section \ref{sec:conclusions}.

\section{Existing Work of Quantum Nonlinear Differential Equation Solvers} \label{sec:quantumnonlindesolvers} 

Several quantum algorithms have been developed to solve nonlinear differential equations.  For instance, some algorithms involve circuits to prepare multiple copies of quantum states to encode nonlinearities.  Leyton and Osborne \cite{leyton2008quantum} applied Hamiltonian simulations to evolve two copies of quantum states simultaneously.  In the algorithm of Lloyd et al. \cite{lloyd2020quantum}, multiple copies of quantum states are prepared and nonlinear PDEs are linearized before quantum linear equation solvers are applied. Lubasch et al. \cite{lubasch2020variational} developed a variational quantum algorithm where a parameterized circuit consists of a quantum nonlinear processing unit to compute the nonlinear terms in the nonlinear Schr\"odinger's equation.  The processing unit consists of controlled NOT operations which enable bit-wise multiplication of multiple copies of variables encoded with basis states.  This algorithm was utilized by Sarma et al. \cite{sarma2024quantum} to solve other nonlinear PDEs in fluid dynamics, biological processes, and finance. Variational quantum algorithms can also handle nonlinearities without multiple copies of variables.  Jaksch et al. \cite{jaksch2023variational} devised a parameterized circuit which calculates the PDE residual as an expectation value. 

In some algorithms, nonlinear differential equations are solved recursively from previous time steps. Gaitan \cite{gaitan2020finding} proposed an algorithm to solve the Navier-Stokes equations, where the quantum amplitude estimation is applied to estimate the time average of solutions under nonlinear ODE operators. This algorithm has also been applied to solve Burgers' equation \cite{oz2022solving} and the radiation diffusion equation \cite{gaitan2024simulating}. Shukla and Vedula \cite{shukla2023hybrid} proposed another algorithm where the solution is expanded as a linear combination of Walsh-Hadamard basis functions. The coefficients of the time integration matrix are obtained by performing the Walsh-Hadamard transform, which is easier to implement than quantum amplitude estimation.  In some other algorithms, the time integration is formulated as a system of linear equations, where the solution of the next time step is recursively obtained from the previous time steps.  An algorithm based on this approach was devised by Gnanasekaran et al. \cite{gnanasekaran2023efficient}, where the discretized Fokker-Planck equation is solved with matrix inversion based on a linear combination of unitaries \cite{berry2017quantum}. Variational quantum linear equation solvers have also been utilized to solve the Lorenz system \cite{fathi2024hybrid} and reservoir flow equations \cite{rao2024performance}.

Other quantum nonlinear differential equation solvers involve linearizing the original nonlinear problem.  One linearization approach is Koopman-von Neumann linearization \cite{joseph2020koopman}, where the nonlinear time evolution of a dynamical system is reformulated as a linear time evolution of an observable of the system. Quantum algorithms based on this approach were developed by Jin et al. \cite{jin2023time} to solve nonlinear ODEs and nonlinear Hamilton-Jacobi PDEs. The discretized linear differential equation is solved with Hamiltonian simulation or quantum linear equation solvers. Koopman-von Neumann linearization in combination with quantum singular value transformation has also been applied to simulate plasma dynamics \cite{higuchi2025quantum}.  

Another linearization approach is Carleman linearization, where a finite-dimensional nonlinear differential equation is transformed into an infinite-dimensional system of linear differential equations by introducing monomials as additional state variables.  Liu et al. \cite{liu2021efficient} developed a method based on Carleman linearization for nonlinear dissipative ODEs. The linearized ODEs are discretized over time and solved with matrix inversion based on a linear combination of unitaries.  The algorithm was further extended by Krovi \cite{krovi2023improved} to linear ODEs with non-diagonalizable matrices, where bounds for the matrix norm are established which ensure that the discretized linear system is stable.  Carleman linearization has also been applied to solve nonlinear PDEs, which are discretized over space prior to being converted to linear ODEs.  This method was used to solve the advection-diffusion \cite{demirdjian2022variational} and reaction-diffusion equations \cite{liu2023efficient}. To simulate fluid flow, Itani et al. \cite{itani2024quantum} proposed a quantum lattice Boltzmann method where Carleman linearization is applied to derive linear bosonic operators which embed nonlinear collisions. Several improvements to Carleman linearization have also been proposed. For instance, Wu et al. \cite{wu2025quantum} generalized Carleman linearization for non-dissipative nonlinear PDEs by enforcing a no-resonance condition, where any eigenvalue of the discretized linear system cannot be a linear combination of other eigenvalues.  To reduce computational cost, Costa et al. \cite{costa2025further} proposed a rescaling strategy where the condition number of the discretized linear system is reduced.

A third linearization approach includes homotopy methods, where the solution is expanded as a power series which consists of an initial guess term and nonlinear correction terms. This power series expansion is utilized to transform the original nonlinear problem into a recursive sequence of linear differential equations known as linear deformation equations. In contrast to the first two linearization approaches, homotopy methods exhibit improved scalability through which the Hilbert space dimension remains constant. Xue et al. \cite{xue2021quantum} proposed a homotopy perturbation method for nonlinear dissipative ODEs, where the linear deformation equations are combined into a larger system of linear ODEs. The solution to the combined system of ODEs is obtained by performing matrix inversion based on a linear combination of unitaries. Xue et al. \cite{xue2025quantum} also proposed a quantum homotopy analysis method for nonlinear PDEs. The linear deformation equations are combined into a single system of linear PDEs through quantum-compatible linearization, where the nonlinear terms are defined as additional state variables. Recently, Choi et al. \cite{choi2026lindbladian} devised a Lindbladian homotopy analysis method to simulate non-unitary and nonlinear dynamics. In this method, the linear deformation equations are reformulated as one homogeneous autonomous system of ODEs by coupling solutions from different homotopy orders. The non-unitary time evolution of the autonomous system is simulated with Lindbladian dynamics by embedding the linear dissipative operator into the jump operator of the Lindblad master equation.

\section{Proposed Quantum Homotopy Pertubation Method} \label{sec:proposedqhpm}
\subsection{Linearization of Nonlinear PDE} \label{subsc:linearization}
The QHPM is devised to solve a nonlinear PDE defined as
\begin{equation} \label{eq:nonlinearPDE}
    \frac{\partial {u}(\boldsymbol{r}, t)}{\partial t} =
    \mathcal{L}({u}(\boldsymbol{r}, t)) + \mathcal{N}({u}(\boldsymbol{r}, t)),
\end{equation}
where ${u}(\boldsymbol{r}, t)$ is the solution at spatial location $\boldsymbol{r}$ and time $t$. On the right side of Eq. (\ref{eq:nonlinearPDE}), $\mathcal{L}$ and $\mathcal{N}$ are linear and nonlinear differential operators, respectively.  The initial condition is denoted as ${u}(\boldsymbol{r}, t=0) = {u}_0$.

Eq. (\ref{eq:nonlinearPDE}) is transformed into a system of linear deformation equations through the homotopy perturbation method, which is based on a continuous transformation from a simple linear PDE to the original nonlinear PDE. This transformation is formulated as
\begin{equation} \label{eq:zerothorderequation}
    (1-p)\mathcal{L}_h\left({v}(\boldsymbol{r}, t) - {v}^{(0)}(\boldsymbol{r}, t)\right) + p\mathcal{N}_h({v}(\boldsymbol{r}, t)) = 0,
\end{equation}
where $p \in [0, 1]$ is an embedding parameter and ${v}^{(0)}$ is an initial guess for the nonlinear PDE solution $u$. Eq. (\ref{eq:zerothorderequation}) consists of linear differential operator
\begin{equation} \label{eq:qhpmlinearoperator}
    \mathcal{L}_h\left({v}(\boldsymbol{r}, t) - {v}^{(0)}(\boldsymbol{r}, t)\right) =  \frac{\partial {v}(\boldsymbol{r}, t)}{\partial t} - \frac{\partial {v}^{(0)}(\boldsymbol{r}, t)}{\partial t} - 
    \mathcal{L}({v}(\boldsymbol{r}, t)) + \mathcal{L}\left({v}^{(0)}(\boldsymbol{r}, t)\right)
\end{equation}
and nonlinear operator
\begin{equation} 
\label{eq:qhpmnonlinearoperator}
    \mathcal{N}_h({v}(\boldsymbol{r}, t)) = \frac{\partial {v}(\boldsymbol{r}, t)}{\partial t} -
    \mathcal{L}({v}(\boldsymbol{r}, t)) - \mathcal{N}({v}(\boldsymbol{r}, t)).
\end{equation}
In Eq. (\ref{eq:zerothorderequation}), both $\mathcal{L}_h$ and $\mathcal{N}_h$ are applied to the approximated solution
\begin{equation} \label{eq:homotopyseries}
    {v}(\boldsymbol{r}, t) = {v}^{(0)}(\boldsymbol{r}, t) + \sum_{j=1}^m p^j {v}^{(j)}(\boldsymbol{r}, t),
\end{equation} where $m$ is the homotopy order and ${v}^{(j)}$ is the $j$\textsuperscript{th}-order nonlinear correction term. The linear operator $\mathcal{L}_h$ is also applied to the initial guess ${v}^{(0)}$.

Based on Eq. (\ref{eq:homotopyseries}), Eq. (\ref{eq:zerothorderequation}) is expanded as an $m$-degree polynomial with respect to $p$.  A collection of linear deformation equations is subsequently obtained by separating the polynomial coefficients in the expanded Eq. (\ref{eq:zerothorderequation}). The linear deformation equations are derived as
\begin{equation} \label{eq:recursivePDEs}
\begin{gathered}
    \mathcal{L}_h\left({v}^{(0)}(\boldsymbol{r}, t)\right)  = \mathcal{L}_h\left({v}^{(0)}(\boldsymbol{r}, t)\right) \\
    \mathcal{L}_h\left({v}^{(1)}(\boldsymbol{r}, t)\right) = \mathcal{R}_1\left({v}^{(0)}(\boldsymbol{r}, t)\right) \\
     \mathcal{L}_h\left({v}^{(2)}(\boldsymbol{r}, t)\right) = \mathcal{R}_2\left({v}^{(0)}(\boldsymbol{r}, t), {v}^{(1)}(\boldsymbol{r}, t)\right) \\
    \vdots \\
    \mathcal{L}_h\left({v}^{(m)}(\boldsymbol{r}, t)\right) = \mathcal{R}_m\left({v}^{(0)}(\boldsymbol{r}, t), {v}^{(1)}(\boldsymbol{r}, t), \dots, {v}^{(m-1)}(\boldsymbol{r}, t)\right), \\
\end{gathered}
\end{equation}
where $\mathcal{R}_j\left({v}^{(0)}, {v}^{(1)}, \dots, {v}^{(j-1)}\right)$ in the $j$\textsuperscript{th}-order deformation equation is obtained by evaluating all nonlinear correction terms up to the $(j-1)$\textsuperscript{th} order at each time step. That is, the initial guess ${v}^{(0)}$ and nonlinear correction terms ${v}^{(1)}$, \dots, ${v}^{(m)}$ are obtained by recursively solving the deformation equations with the increasing order.
The approximated  solution ${v}$ is subsequently computed based on Eq. (\ref{eq:homotopyseries}), where $p$ is set to 1.

\subsection{Decomposition of Linear Deformation Equations}
The linear deformation equations in Eq. (\ref{eq:recursivePDEs}) are nonhomogeneous. The difficulty of solving the $j$\textsuperscript{th}-order linear deformation equation is reduced by decomposing the original problem into two sub-problems of computing the particular and homogeneous components of ${v}^{(j)}$.

The time-independent particular component, which is denoted as $\hat{v}^{(j)}(\boldsymbol{r})$, represents the steady-state behavior of the $j$\textsuperscript{th}-order linear deformation equation. The particular component is obtained from a steady-state PDE defined as
\begin{equation} \label{eq:steadystatepde} 
\mathcal{L}_h\left(\hat{{v}}^{(j)}\left(\boldsymbol{r}\right)\right) =\mathcal{R}_j\left(\hat{{v}}^{(0)}\left(\boldsymbol{r}\right), \hat{{v}}^{(1)}\left(\boldsymbol{r}\right), ..., \hat{{v}}^{(j-1)}\left(\boldsymbol{r}\right)\right),
\end{equation}
where  $\mathcal{R}_j\left(\hat{{v}}^{(0)}, \hat{{v}}^{(1)}, ..., \hat{{v}}^{(j-1)}\right)$ depends on the previously computed steady-state solutions at lower orders. The steady-state solution $\hat{{v}}^{(0)}$ is
equivalent to the steady-state solution of the nonlinear PDE in Eq. (\ref{eq:nonlinearPDE}). The higher-order steady-state solutions $\hat{{v}}^{(1)}$, \dots, $\hat{{v}}^{(j)}$ are computed recursively with the increasing order. That is, after $\hat{{v}}^{(0)}$, $\hat{{v}}^{(1)}$, \dots, $\hat{{v}}^{(j-1)}$ are calculated, $\hat{{v}}^{(j)}$ is obtained by solving the $j$\textsuperscript{th}-order steady-state PDE.  Different methods can be used to solve the steady-state linear PDEs. For instance, they can be converted to systems of linear equations by discretizing the spatial domain as in FDM.

The time-dependent homogeneous component, which is denoted as $\Tilde{{v}}^{(j)}(\boldsymbol{r}, t)$, is obtained by solving
\begin{equation} \label{eq:homogeneouspde}
     \mathcal{L}_h\left(\Tilde{{v}}^{(j)}(\boldsymbol{r}, t)\right) = 0.
\end{equation}
This PDE is similar to the $j$\textsuperscript{th}-order linear deformation equation in Eq. (\ref{eq:recursivePDEs}), with the exception of $\mathcal{R}_j$ being omitted. Eq. (\ref{eq:homogeneouspde}) is solved with the VQS framework, which is described in Section \ref{sec:VQS}.


After the solutions of the decomposed linear deformation equations are obtained, the solution is approximated as
\begin{equation} \label{eq:homotopyseriesdecomposed}
    {v}(\boldsymbol{r}, t) = {v}^{(0)}(\boldsymbol{r}, t) + \sum_{j=1}^m p^j \left[ \hat{{v}}^{(j)}(\boldsymbol{r}) + \Tilde{{v}}^{(j)}(\boldsymbol{r}, t) \right].
\end{equation}
As the nonlinearity increases, more correction terms are needed to improve the accuracy of the approximated solution $v$. Nevertheless, as $m$ increases, QHPM can become computationally expensive because a large number of linear deformation equations in Eq. (\ref{eq:recursivePDEs}) must be solved. Therefore, the minimal homotopy order that allows us to achieve the targeted threshold of approximation error needs to be estimated.

\subsection{Homotopy Order Selection Criterion} \label{subsec:homotopytruncation}

The relationship between the homotopy order $m$ and the approximation error $\left\Vert {u} - \sum_{j=1}^{m+1} {v}^{(j-1)}\right\Vert$ has been studied. The approximation error is reduced with the convergence rate in the polynomial order of contractive ratio.
\begin{theorem} \label{thm:solutionapproximation}
\cite{odibat2010study} 
Let $m$ denote the homotopy order and $q = \max_{j \in [1, m]} \left\Vert v^{(j)} \right\Vert / \left\Vert v^{(j-1)} \right\Vert$ denote the contractive ratio. Then
\begin{equation}
\label{eq:qratio}
\left\Vert  {u} - \sum_{j=1}^{m+1} {v}^{(j-1)}\right\Vert \leq \frac{q^{m+1}}{1-q} \left\Vert {v}^{(0)}\right\Vert.
\end{equation}
\end{theorem}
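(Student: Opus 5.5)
The plan is to read the bound as a geometric-series tail estimate. This needs the standing convergence assumption underlying the homotopy perturbation method: the full series converges at $p=1$, so that $u = \sum_{j=0}^{\infty} v^{(j)}$ in the chosen norm. The approximation error is then exactly the tail of this series:
\begin{equation*}
u - \sum_{j=1}^{m+1} v^{(j-1)} = \sum_{k=m+1}^{\infty} v^{(k)} .
\end{equation*}
The triangle inequality (valid for the infinite sum because the series converges in norm, or by passing to the limit of partial sums) bounds the error by $\sum_{k=m+1}^{\infty}\|v^{(k)}\|$.

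The second step is to convert the ratio condition into a geometric bound on each term. By the definition of $q$, $\|v^{(j)}\| \le q\,\|v^{(j-1)}\|$ for every $j$ with $1\le j\le m$. Iterating from $j=0$ gives $\|v^{(j)}\| \le q^{j}\|v^{(0)}\|$ for $j \le m$. For the tail terms $k\ge m+1$, I would use the contractive ratio in the sense of the cited result of Odibat: $q<1$ bounds all successive ratios $\|v^{(j)}\|/\|v^{(j-1)}\|$ along the series. Extending the same iteration then gives $\|v^{(k)}\| \le q^{k}\|v^{(0)}\|$ for all $k$. Summing the geometric tail,
\begin{equation*}
\sum_{k=m+1}^{\infty} q^{k}\|v^{(0)}\| = \frac{q^{m+1}}{1-q}\,\|v^{(0)}\|,
\end{equation*}
which is exactly the claimed estimate.

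The main obstacle is the step just described. Read literally, $q$ is defined as a maximum over $j\in[1,m]$ only, which says nothing about $v^{(m+1)}, v^{(m+2)},\dots$, yet the tail consists precisely of those terms. The proof therefore has to make explicit the hypothesis in the cited reference: there is a $0<q<1$ with $\|v^{(j+1)}\|\le q\|v^{(j)}\|$ for all $j\ge 0$, and the stated maximum is taken as an estimate of this uniform constant. That same uniform hypothesis also supplies the convergence assumed in the first step, since the ratio test guarantees absolute convergence of $\sum_j v^{(j)}$ in a Banach space. What it does not supply is that the limit equals $u$. For that, I would argue that the limit satisfies the original PDE by letting $p\to1$ in the homotopy equation, or else cite the convergence theorem for homotopy perturbation series, again as in the cited work of Odibat.
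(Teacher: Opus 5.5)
Your argument is correct and is essentially the proof behind the citation. The paper gives no proof of its own and defers to Odibat, whose result assumes $\Vert v^{(j+1)}\Vert \le q\Vert v^{(j)}\Vert$ for all $j\ge 0$ with $0<q<1$ and that the series converges to $u$; it then takes the partial-sum estimate $\Vert S_n - S_m\Vert \le \frac{1-q^{n-m}}{1-q}\,q^{m+1}\Vert v^{(0)}\Vert$, with $S_n=\sum_{j=0}^{n} v^{(j)}$, and lets $n\to\infty$, which is your geometric tail bound. You are also right that the statement's literal definition of $q$ as a maximum over $j\in[1,m]$ says nothing about $v^{(m+1)},v^{(m+2)},\dots$ and does not even force $q<1$, so the uniform-ratio hypothesis you make explicit is genuinely needed for the inequality to hold.
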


\begin{corollary}
\label{thm:truncation}
Given the targeted approximation error $\epsilon$, the homotopy order $m$ needs to satisfy
\begin{equation} 
    m \geq \max\left(0, \frac{\log \left( \frac{\epsilon(1-q)}{\Vert {v}^{(0)} \Vert} \right)}{\log\left( q \right)} - 1\right).
\end{equation}
\begin{proof}
It is sufficient to assume from Eq. (\ref{eq:qratio}) that
\begin{equation} \label{eq:lessthanepsilon}
    \frac{q^{m+1}}{1-q} \left\Vert {v}^{(0)}\right\Vert \leq \epsilon
\end{equation}
so that $\left\Vert {u} - \sum_{k=1}^{m+1} {v}^{(k-1)}\right\Vert \leq \epsilon$.  The re-arrangement of Eq. (\ref{eq:lessthanepsilon}) yields
\begin{equation}
\label{eq:lowerbound}
    m \geq \frac{\log\left(\frac{\epsilon(1-q)}{\left\Vert {v}^{(0)} \right\Vert}\right)}{\log(q)} - 1.
\end{equation}
The right-hand side of Eq. (\ref{eq:lowerbound}) is negative when $\Vert {v}^{(0)} \Vert < \epsilon{(1-q)}/{q}$. Since $q \in (0, 1)$, $\Vert {v}^{(0)} \Vert < \epsilon$, which means that ${v}^{(0)}$ is accurate enough to meet the target approximation error. Therefore,
\begin{equation}
    m \geq \text{max}\left(0, \frac{\log\left(\frac{\epsilon(1-q)}{\left\Vert {v}^{(0)} \right\Vert}\right)}{\log(q)} - 1\right).
\end{equation}
\end{proof}
\end{corollary}

\noindent Corollary \ref{thm:truncation} provides the guidance to select the minimum value of $m$ given the targeted approximation error. As the value of $\epsilon$ decreases, the minimum number of nonlinear correction terms increases logarithmically. Because $q \in (0, 1)$, the complexity of the homotopy order is in the order $\mathcal{O}(\log(\epsilon^{-1}))$.

\section{Variational Quantum Simulation Framework} 
\label{sec:VQS}

The homogeneous components $\Tilde{v}^{(j)}(\boldsymbol{r}, t)$'s are obtained by solving Eq. (\ref{eq:homogeneouspde}) with our recently developed VQS framework \cite{sul2024quantum, sul2025generic}. In this framework, $\Tilde{v}^{(j)}$ is expanded as a linear combination of interpolation functions through quantum functional encoding.  The parameters of the VQS circuit to obtain $\Tilde{v}^{(j)}$ are evolved over time with Euler's method.  The time derivatives of circuit parameters are obtained by solving a system of linear equations. This linear system is a reformulation of McLachlan's variational principle, which minimizes the residual of Eq. (\ref{eq:homogeneouspde}).

\subsection{Quantum Functional Encoding} \label{subsec:qfe}
In quantum functional encoding, $\Tilde{{v}}^{(j)}$ is expanded as a linear combination of basis functions. In QHPM, $\Tilde{{v}}^{(j)}$ is expanded as
\begin{equation} \label{eq:collocation}
    \Tilde{{v}}^{(j)}(\boldsymbol{r}, t) \approx \sum_{k=1}^{d} \Tilde{v}^{(j)}(\boldsymbol{r}_k, t)\phi_k(\boldsymbol{r}),
\end{equation}
where $d$ is the total number of collocation points, $\boldsymbol{r}_k$ is the location of the $k$\textsuperscript{th} collocation point, and $\phi_k$ is an interpolation function.  The quantum state that encodes the expansion coefficients in Eq. (\ref{eq:collocation}) is
\begin{equation}
\label{eq:quantumfunctionalexpansion}
    |\Tilde{v}^{(j)}(\boldsymbol{r}, t)\rangle = \sum_{k=1}^{d} \frac{\Tilde{v}^{(j)}(\boldsymbol{r}_k, t)}{\alpha}|k\rangle,
\end{equation}
where $\alpha$ is a scale factor and $|k\rangle$ is a computational basis state corresponding to the $k$\textsuperscript{th} collocation point.  The state $|\Tilde{v}^{(j)}(\boldsymbol{r}, t)\rangle$ is obtained with a parameterized circuit consisting of $n = \log_2 d$ qubits.  

The circuit is constructed with operator
\begin{equation} \label{eq:vqsansatz}
    U(\boldsymbol{\theta}) = \prod_{l=1}^w  \left( U_e  U_r(\boldsymbol{\theta}^{(w+1-l)}) \right),
\end{equation}
where $w$ is the circuit depth and $\boldsymbol{\theta}^{(l)}$ is a vector which includes $n$ circuit parameters $\theta_{(l-1)n+1}, \dots, \theta_{ln}$.  The $l$\textsuperscript{th}  layer of rotation gates is defined as $U_r(\boldsymbol{\theta}^{(l)}) = \otimes_{s = 1}^n \exp(-i
\theta_{(l-1)n+s}\sigma_Y/2)$, where $\sigma_Y$ is the Pauli-Y matrix.  Each layer of $U_r$'s is constructed with $n$ RY gates that allow the circuit to explore the Hilbert space.  The $l$\textsuperscript{th} layer of entanglement operators is defined as $U_e = \prod_{s=1}^n \left( |0^{(s)}\rangle\langle 0^{(s)}| \otimes I_2^{(s+1)} + |1^{(s)}\rangle\langle 1^{(s)}| \otimes \sigma_X^{(s+1)} \right)$, where $I_2$ is the $2 \times 2$ identity matrix and $\sigma_X$ is the Pauli-X matrix.  The circuit alternates between $U_r$ and $U_e$ for $w$ repetitions.  The resulting state $|\Tilde{v}^{(j)}\rangle$, which can be obtained with quantum state tomography, is multiplied by $\alpha$ to obtain the homogeneous component $\Tilde{v}^{(j)}$.


\subsection{Variational Quantum Simulation} \label{subsec:maclachlan}

In our VQS framework \cite{sul2024quantum, sul2025generic}, the values of $\boldsymbol{\theta}(t)$ and $\alpha(t)$ which result in the state $|\Tilde{{v}}^{(j)}(\boldsymbol{r}, t)\rangle$ are obtained by solving McLachlan's variational principle. Based on the homogeneous PDE in Eq. (\ref{eq:homogeneouspde}), McLachlan's variational principle is defined as
\begin{equation} \label{eq:mclachlan}
    \delta \left\Vert \frac{\partial}{\partial t}|\Tilde{v}^{(j)}(\boldsymbol{\Theta}(t))\rangle - {L}_h |\Tilde{v}^{(j)}(\boldsymbol{\Theta}(t))\rangle \right\Vert = 0,
\end{equation}
where $\boldsymbol{\Theta}(t) = \begin{bmatrix} \alpha(t) & \boldsymbol{\theta}(t) \\ 
\end{bmatrix}$ and $L_h$ is the matrix representation of $\mathcal{L}_h$. Eq. (\ref{eq:mclachlan}) is subsequently reformulated as a system of $wn+1$ linear equations
\begin{equation} \label{eq:mclachlanlinearsys}
\begin{gathered}
A_{0,0}(t)\frac{\partial \alpha}{\partial t} + \sum_{l=1}^{wn} A_{0,l}(t)\frac{\partial \theta_l}{\partial t} = b_0(t) \\
A_{1,0}(t)\frac{\partial \alpha}{\partial t} + \sum_{l=1}^{wn} A_{1,l}(t)\frac{\partial \theta_l}{\partial t} = b_1(t) \\
\vdots \\
A_{wn,0}(t)\frac{\partial \alpha}{\partial t} + \sum_{l=1}^{wn} A_{wn,l}(t)\frac{\partial \theta_l}{\partial t} = b_{wn}(t), \\
\end{gathered}
\end{equation}
where the parameter derivatives ${\partial \alpha}/{\partial t}, {\partial \theta_1}/{\partial t},  \dots {\partial \theta_{wn}}/{\partial t}$ can be obtained as the solutions. The real-valued coefficients 
\begin{equation}
    A_{k,l}(t) = \text{Re}\left\{\frac{\partial \langle\Tilde{v}(\boldsymbol{\Theta}(t))|}{\partial \theta_l} \frac{\partial |\Tilde{v}(\boldsymbol{\Theta}(t))\rangle}{\partial \theta_k} \right\}
\end{equation}
and
\begin{equation}
    b_{k}(t) = \text{Re}\left\{ \frac{\partial \langle\Tilde{v}(\boldsymbol{\Theta}(t))|}{\partial \theta_l} L_h |\Tilde{v}(\boldsymbol{\Theta}(t))\rangle \right\}
\end{equation}
are computed prior to solving the linear system in Eq. (\ref{eq:mclachlanlinearsys}). In total, $(wn+1)^2 + 4wn$ circuits are required to compute all coefficients. The entries of the coefficient matrix $A$ are computed with $(wn+1)^2$ circuits, whereas the entries of the coefficient vector $\boldsymbol{b}$ are computed with $4wn$ circuits. The circuits to compute $\boldsymbol{b}$ are implemented with a parallel Pauli operation strategy, where the circuits for coefficients are constructed with ancillary qubits which allow $L_h$ to be expanded as a linear combination of Pauli strings. This is a significant improvement over the original VQS \cite{yuan2019theory}, where each $b_k$ is computed with $4^d$ circuits as the worst-case scenario.  After all entries of $A$ and $\boldsymbol{b}$ are obtained, the linear system is solved for ${\partial \alpha}/{\partial t}$ and ${\partial \theta_l}/{\partial t}$'s on a classical computer.

The time derivatives are then used to evolve $\alpha$ and $\boldsymbol{\theta}$ over a time step $\Delta t$. The parameter update rules are based on the Euler's forward method, as
\begin{equation}
    \boldsymbol{\theta}(t + \Delta t) = \boldsymbol{\theta}(t) + \Delta t \frac{\partial \boldsymbol{\theta}}{\partial t},
\end{equation}
and
\begin{equation}
    \alpha(t + \Delta t) = \alpha(t) + \Delta t \frac{\partial \alpha}{\partial t}.
\end{equation}
The parameterized circuit in Eq. (\ref{eq:vqsansatz}) is executed with the updated $\boldsymbol{\theta}(t + \Delta t)$ to obtain $|\Tilde{v}^{(j)}(t + \Delta t)\rangle$.  The amplitudes of  $|\Tilde{v}^{(j)}(t + \Delta t)\rangle$ are multiplied by the updated $\alpha(t + \Delta t)$ to obtain the homogeneous solution $\Tilde{v}^{(j)}(t + \Delta t)$.

\subsection{VQS Circuit Depth Selection}

The searching behavior during the Hilbert space exploration depends on the VQS circuit depth $w$.  As $w$ increases, the number of parameters in the parameterized circuits increases, which also implies that a larger number of circuits is required to solve the linear system in Eq. (\ref{eq:mclachlanlinearsys}).  Therefore, it is critical to select $w$ so that the Hilbert space is sufficiently explored with the minimal computational overhead.

The extent of Hilbert space exploration can be quantified by the Fubini-Study metric \cite{stokes2020quantum}. The Fubini-Study metric of the circuit in Eq. (\ref{eq:vqsansatz}) is
\begin{equation} \label{eq:fstensor}
        {G}(\boldsymbol{\theta}) =
        \begin{bmatrix}
        G^{(1)}(\boldsymbol{\theta}) & 0_{n \times n} & \cdots & 0_{n \times n} \\
       0_{n \times n} & G^{(2)}(\boldsymbol{\theta}) & \cdots & 0_{n \times n} \\
        \vdots & \vdots & \ddots & \vdots \\
        0_{n \times n} & 0_{n \times n} & \cdots & G^{(n)}(\boldsymbol{\theta})  \\
        \end{bmatrix},
\end{equation}
where 
\begin{equation}
    \label{eq:covariancematrixsinglelayer}
        {G}^{(s)}(\boldsymbol{\theta}) =
        \begin{bmatrix}
        g_{(s-1)n+1,(s-1)n+1}(\boldsymbol{\theta}) & g_{(s-1)n+1,(s-1)n+2}(\boldsymbol{\theta}) & \cdots & g_{(s-1)n+1,sn}(\boldsymbol{\theta}) \\
       g_{(s-1)n+2,(s-1)n+1}(\boldsymbol{\theta}) & g_{(s-1)n+2,(s-1)n+2}(\boldsymbol{\theta}) & \cdots & g_{(s-1)n+2,sn}(\boldsymbol{\theta}) \\
        \vdots & \vdots & \ddots & \vdots \\
        g_{sn,(s-1)n+1}(\boldsymbol{\theta}) & g_{sn,(s-1)n+2}(\boldsymbol{\theta}) & \cdots & g_{sn,sn}(\boldsymbol{\theta}) \\
        \end{bmatrix}
\end{equation}
is the Fubini-Study metric of the $s$\textsuperscript{th} layer of the circuit. In Eq. (\ref{eq:covariancematrixsinglelayer}), $g_{k,l}(\boldsymbol{\theta})$ is the covariance of the sensitivities of the quantum state with respect to parameters $\theta_k$ and $\theta_l$, which is defined as 
\begin{equation} 
    \label{eq:fsmetric}
    g_{k,l}(\boldsymbol{\theta}) = \frac{\partial\langle\psi(\boldsymbol{\theta})|}{\partial\theta_k}\frac{\partial|\psi(\boldsymbol{\theta})\rangle}{\partial \theta_l} - \langle\psi(\boldsymbol{\theta})|\frac{\partial|\psi(\boldsymbol{\theta})\rangle}{\partial \theta_k}\frac{\partial \langle\psi(\boldsymbol{\theta})|}{\partial \theta_l}|\psi(\boldsymbol{\theta})\rangle.
\end{equation}
The determinant of $G$, denoted as $\det(G)$, is the squared volume density of the Hilbert space explored by the VQS circuit. The volume density quantifies the scale of the Hilbert space relative to the space of circuit parameters.  That is, if the Hilbert space is $g^*$ times larger than the parameter space, then $\det(G) = g^*$. The minimum value of $\det(G)$ is $0$, which occurs when the gradients of state $|\psi(\boldsymbol{\theta})\rangle$ with respect to at least two parameters are equivalent. In this case, the directions of correlated parameters are perfectly aligned with each other. Since $G^{(s)}$ is a sub-matrix of the block diagonal matrix $G$, $\det(G)$ is obtained as
\begin{equation} \label{eq:detg}
    \det(G) = \prod_{s=1}^w {\det{({G}^{(s)})}}.
\end{equation}
\noindent For the VQS circuit in Eq. (\ref{eq:vqsansatz}), $\det(G)$ decreases exponentially as $w$ and $n$ increase. That is, the exploration of the solution becomes exponentially more difficult as either the circuit depth or the number of qubits increases, which is shown in Theorem \ref{thm:expdecay}. 
\begin{theorem}
\label{thm:expdecay}
Let $G^{(s)}$ denote the Fubini-Study metric of the $s$\textsuperscript{th} circuit layer. Then it holds that
\begin{equation} \label{eq:detgbound}
    \det\left({G}\right) < \left( \frac{1}{2} \right)^{wn}.
\end{equation}
\begin{proof}
The Fubini-Study metric $G^{(s)}$ is decomposed as $G^{(s)} = D^{(s)} + N^{(s)}$, where $D^{(s)}$ and $N^{(s)}$ are the diagonal and off-diagonal components of $G^{(s)}$, respectively. Equivalently, $\Tilde{G}^{(s)} = I_n + \Tilde{N}^{(s)}$, where $\Tilde{G}^{(s)} = \beta {G}^{(s)}$ and $\Tilde{N}^{(s)} = \beta{N}^{(s)}$ are matrices scaled from ${G}^{(s)}$ and ${N}^{(s)}$. The scaled diagonal matrix $\beta{D}^{(s)}$ is equivalent to identity matrix $I_n$ since all diagonal entries in ${D}^{(s)}$ are identical. This is because the same rotation gates are applied to all qubits. By Jacobi's formula, where  $\ln{(\det(\Tilde{G}^{(s)}))} = \text{Tr}{(\ln{(\Tilde{G}^{(s)})})}$,
the Taylor series expansion of $\ln{(\det(\Tilde{G}^{(s)}))}$ is
\begin{equation} \label{eq:taylorexpansion}
   \ln{(\det(\Tilde{G}^{(s)}))} = \sum_{s = 1}^\infty (-1)^{s+1}\frac{\text{Tr}{((\Tilde{N}^{(s)})^s)}}{s}.
\end{equation}
It is observed that Eq. (\ref{eq:taylorexpansion}) is dominated by the second-order term $-{\text{Tr}{((\Tilde{N}^{(s)})^2)}}/2$ since $\text{Tr}{(\Tilde{N}^{(s)})} = 0$. It is also observed that $\text{Tr}{((\Tilde{N}^{(s)})^2)} \geq 0$ since $\Tilde{N}$ is a symmetric matrix. Therefore, $\ln(\det(\Tilde{G}^{(s)})) \leq 0$, which is equivalent to $\det(\Tilde{G}^{(s)}) \leq 1$. Subsequently, $\det({G}^{(s)}) \leq \beta^{-n}$. Given the circuit in Eq. (\ref{eq:vqsansatz}), the derivative of $\left|\psi\left(\boldsymbol{\theta}\right)\right\rangle$ with respect to $\theta_{k}$ is
\begin{equation} \label{eq:statederivative}
    \frac{\partial\left|\psi\left(\boldsymbol{\theta}\right)\right\rangle}{\partial \theta_k} = -\frac{i}{2}\left[ \prod_{s=\lceil k/n \rceil+1}^w  \left( U_e  U_r(\boldsymbol{\theta}^{(\lceil k/n \rceil+1+w-s)}) \right) \right] U_e  P_kU_r(\boldsymbol{\theta}^{(\lceil k/n \rceil}) \left[ \prod_{s=1}^{\lceil k/n \rceil-1}  \left( U_e  U_r(\boldsymbol{\theta}^{(\lceil k/n \rceil-s)}) \right) \right] |0\rangle^{\otimes n},
\end{equation}
where $P_k$ is a $2^n \times 2^n$ matrix defined as
\begin{equation} \label{eq:pauliblockmatrix}
    P_k = I_{2^{k (\text{mod} n) - 1}} \otimes \sigma_y \otimes I_{2^{n - k (\text{mod} n)}}.
\end{equation}
By substitution of Eq. (\ref{eq:statederivative}) into Eq. (\ref{eq:fsmetric}), each entry in $G^{(s)}$ is
\begin{equation} \label{eq:fsentry}
    g_{k,l} = \frac{1}{4} (\left\langle  P_kP_l \right\rangle - \left\langle P_k \right\rangle \left\langle P_l  \right\rangle),
\end{equation}
where $\left\langle P_kP_l \right\rangle$, $\left\langle  P_k \right\rangle$, and $\left\langle P_l \right\rangle$ are expectation values with respect to $\left|\psi\left(\boldsymbol{\theta}\right)\right\rangle$. It is observed that the maximum and minimum values of $\langle P_k \rangle$, $\langle P_l \rangle$, and $\langle P_kP_l \rangle$  are 1 and $-1$. It follows that the maximum value of any entry in ${G}^{(s)}$ is $1/2$. Therefore, $\beta^{-n} \leq (1/2)^n$. Subsequently,
\begin{equation} \label{eq:detgsbound}
    \det({G}^{(s)}) \leq \left(\frac{1}{2}\right)^n.
\end{equation}
By substitution of Eq. (\ref{eq:detgsbound}) into Eq. ({\ref{eq:detg}}), Eq. (\ref{eq:detgbound}) is obtained.
\end{proof}
\end{theorem}

\section{Simulation Examples} \label{sec:results}
\subsection{Example \#1: Vorticity Transport Equation}
The vorticity transport equation is a second-order nonlinear PDE which describes the time evolution of local fluidic rotations. The PDE is formulated as
\begin{equation} \label{eq:vorticitytransport}
\frac{\partial \omega}{\partial t}  = \frac{\partial \phi}{\partial x} \frac{\partial \omega}{\partial y} -  \frac{\partial \phi}{\partial y} \frac{\partial \omega}{\partial x} + \nu \left( \frac{\partial^2 \omega}{\partial x^2} + \frac{\partial^2 \omega}{\partial y^2} \right),
\end{equation}
where $\nu$ is the kinematic viscosity, $t$ is time, and $x$ and $y$ are Cartesian coordinates. The solutions to Eq. (\ref{eq:vorticitytransport}) include the vorticity $\omega$ and streamfunction $\phi$. Both $\omega$ and $\phi$ are related by Poisson's equation, which is
\begin{equation} \label{eq:poisson}
\frac{\partial^2 \phi}{\partial x^2} + \frac{\partial^2 \phi}{\partial y^2} + \omega = 0.
\end{equation}

In this example, $\omega$ and $\phi$ are computed at multiple points arranged into a square grid with a side length of 1 m. The values of $\omega$ and $\phi$ in an $8 \times 8$ grid are obtained with QHPM.  The initial vorticity field is $\omega(x, y, t=0) = 0.75\exp(-40((x-0.5)^2 + (y-0.5)^2))$. Given $\omega(x, y, t=0)$, the initial streamfunction field $\phi(x, y, t=0)$ is obtained by solving Eq. (\ref{eq:poisson}) with FDM.  The vorticity transport equation is solved for 5 time steps, where each time step occurs for $\Delta t = 0.0002$ s. At each time step, the $m+1$ homotopy series terms ${\omega}_v^{(j)}$'s in ${\omega} \approx {\omega}_v^{(0)} + \sum_{j=1}^m {\omega}_v^{(j)}$ and $m+1$ homotopy series terms ${\phi}_v^{(j)}$'s in ${\phi} \approx {\phi}_v^{(0)} + \sum_{j=1}^m {\phi}_v^{(j)}$ are obtained by solving the initial linear deformation equation \begin{equation} \label{eq:zerothvorticitydeformationeq}
    \frac{\partial {\omega}_v^{(0)}}{\partial t} = \nu \left( \frac{\partial^2 {\omega}_v^{(0)}}{\partial x^2} + \frac{\partial^2 {\omega}_v^{(0)}}{\partial y^2} \right)
\end{equation}
and $m$ linear deformation equations of the form
\begin{equation} \label{eq:vorticitydeformationeq}
    \frac{\partial {\omega}_v^{(j)}}{\partial t} = \nu \left( \frac{\partial^2 {\omega}_v^{(j)}}{\partial x^2} + \frac{\partial^2 {\omega}_v^{(j)}}{\partial y^2} \right) + \sum_{k = 1}^{j+1}\sum_{l=1}^{k} \left(  \frac{\partial \phi_v^{(l-1)}}{\partial x} \frac{{\omega}_v^{(k-l)}}{\partial y} - \frac{\partial \phi_v^{(l-1)}}{\partial y} \frac{\partial {\omega}_v^{(k-l)}}{\partial x} \right).
\end{equation}
The initial linear term ${\omega}_v^{(0)}(x,y,t=\Delta t)$ is obtained by solving Eq. (\ref{eq:zerothvorticitydeformationeq}) with VQS, where the initial guess ${\omega}_v^{(0)}(x,y,t=0)$ is set to the nonlinear initial condition ${\omega}(x,y,t=0)$. After ${\omega}_v^{(0)}(x,y,t=\Delta t)$ is computed, ${\phi}_v^{(0)}(x,y,t=\Delta t)$ is obtained by solving Eq. (\ref{eq:poisson}) with FDM. The nonlinear correction terms ${\omega}_v^{(j)}(x,y,t=\Delta t)$'s and ${\phi}_v^{(j)}(x,y,t=\Delta t)$'s for all $j \geq 1$ are then computed by solving Eq. (\ref{eq:vorticitydeformationeq}), which is decomposed into two sub-problems. One sub-problem is to solve the steady-state PDE
\begin{equation} \label{eq:vorticitysteadystateeq}
    \nu \left( \frac{\partial^2 \hat{{\omega}}_v^{(j)}}{\partial x^2} + \frac{\partial^2 \hat{{\omega}}_v^{(j)}}{\partial y^2} \right) =  -\sum_{k = 1}^{j+1}\sum_{l=1}^{k} \left( \frac{\partial \hat{\phi}_v^{(l-1)}}{\partial y} \frac{\partial \hat{\phi}_v^{(k-l)}}{\partial x} - \frac{\partial \hat{\phi}_v^{(l-1)}}{\partial x} \frac{\hat{{\omega}}_v^{(k-l)}}{\partial y} \right),
\end{equation}
where the solution $\hat{\omega}_v^{(j)}(x,y)$ is the steady-state component of ${\omega}_v^{(j)}(x,y,t=\Delta t)$. The sum on the right side of Eq. (\ref{eq:vorticitysteadystateeq}) depends on steady-state solutions previously obtained from lower-order linear deformation equations. Eq. (\ref{eq:poisson}) is then solved to obtain $\hat{\phi}_v^{(j)}(x,y)$.
The other sub-problem is to solve the homogeneous PDE
\begin{equation} \label{eq:vorticityhomogeneouseq}
    \frac{\partial \Tilde{{\omega}}_v^{(j)}}{\partial t} = \nu \left( \frac{\partial^2 \Tilde{{\omega}}_v^{(j)}}{\partial x^2} + \frac{\partial^2 \Tilde{{\omega}}_v^{(j)}}{\partial y^2} \right)
\end{equation}
with VQS, where the initial condition  $\Tilde{\omega}_v^{(j)}(x,y,t=0)$ is set to $-\hat{\omega}_v^{(j)}(x,y)$ since ${\omega}_v^{(j)}(x,y,t) = 0$ for all $j \geq 1$. The solution $\Tilde{\phi}_v^{(j)}(x,y,t=0)$ is then obtained by solving Eq. (\ref{eq:poisson}). After all homotopy series terms are computed, the nonlinear solutions $\omega$ and $\phi$ in the $8 \times 8$ grid are approximated as $\omega(x,y,t=\Delta t) \approx {\omega}_v^{(0)}(x,y,\Delta t) + \sum_{j=1}^m \left[\hat{\omega}_v^{(j)}(x,y) + \Tilde{\omega}_v^{(j)}(x,y,t=\Delta t) \right]$ and $\phi(x,y,t=\Delta t) \approx {\phi}_v^{(0)}(x,y,\Delta t) + \sum_{j=1}^m \left[\hat{\phi}_v^{(j)}(x,y) + \Tilde{\phi}_v^{(j)}(x,y,\Delta t) \right]$. The values of $\omega$ and $\phi$ on the 8 $\times$ 8 grid are then used to approximate solution values in a 100 $\times$ 100 grid with the Chebyshev spectral collocation method. At coordinates $x$ and $y$, $\omega(x,y,t) \approx \sum_{k=1}^{8} \sum_{l=1}^{8}  \bar{\omega}(x_k, y_l, t)\xi(x_k)\xi(y_l)$ and $\phi(x,y,t) \approx \sum_{k=1}^{8} \sum_{l=1}^{8}  \bar{\phi}(x_k, y_l, t)\xi(x_k)\xi(y_l)$ are approximated as linear combinations of products of two interpolation functions $\xi(x_k)$ and $\xi(y_l)$, where $\xi$ is a Chebyshev polynomial of the first kind. Each $\xi(x_k)\xi(y_l)$ is scaled with coefficients $\bar{\omega}(x_k, y_k, t)$ and $\bar{\phi(}x_k, y_k, t)$, which are discrete cosine transformations of $\omega(x_k, y_k, t)$ and $\phi(x_k, y_k, t)$, respectively. 

\begin{figure}[h!] 
        \centering
    \begin{subfigure}{0.45\textwidth}
        \centering
        \includegraphics[width=\linewidth, clip]{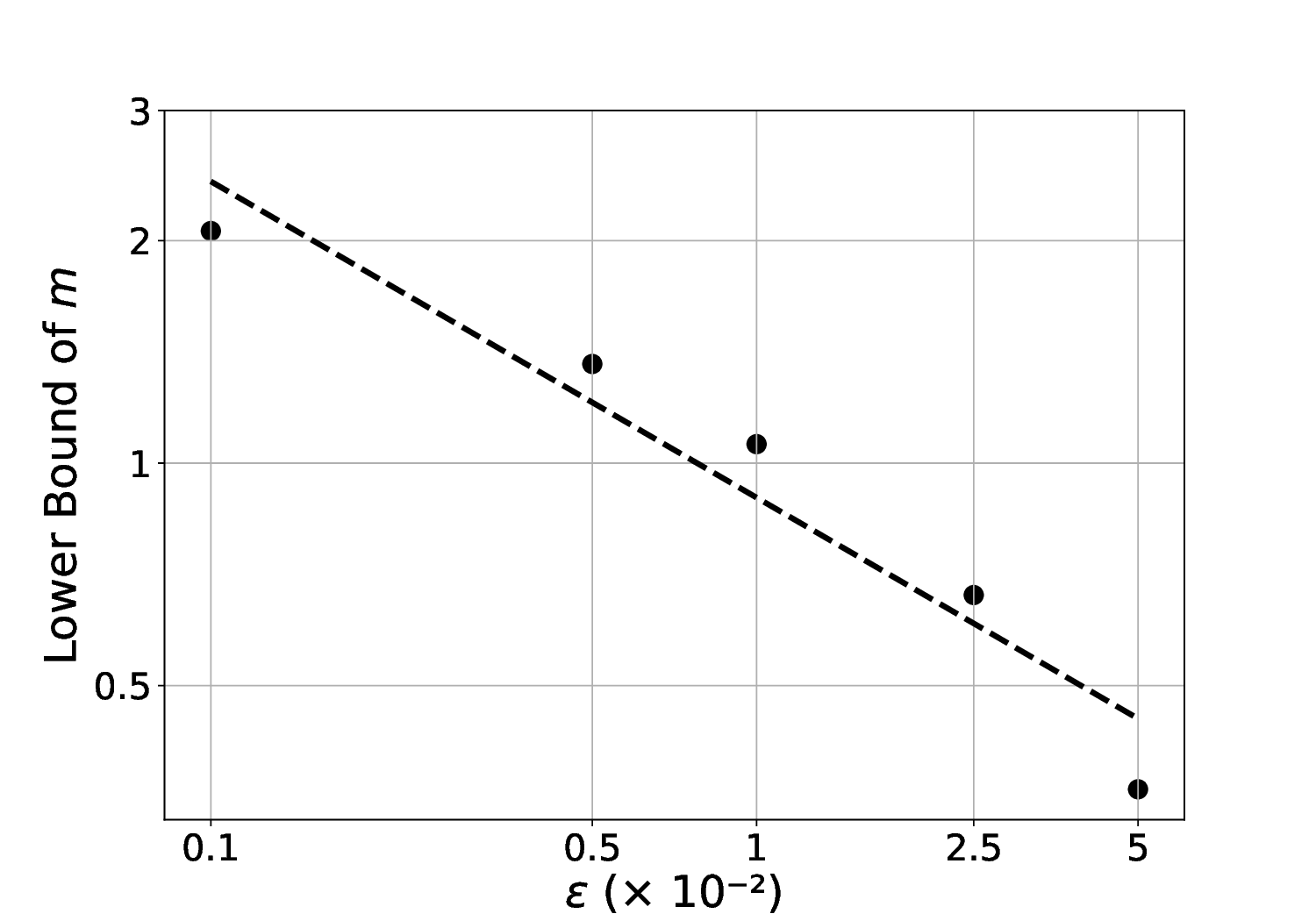}
        \caption{}
        \label{subfig:vorticitytransport_truncationerror_omega}
    \end{subfigure}
    \begin{subfigure}{0.45\textwidth}
        \centering
        \includegraphics[width=\linewidth, clip]{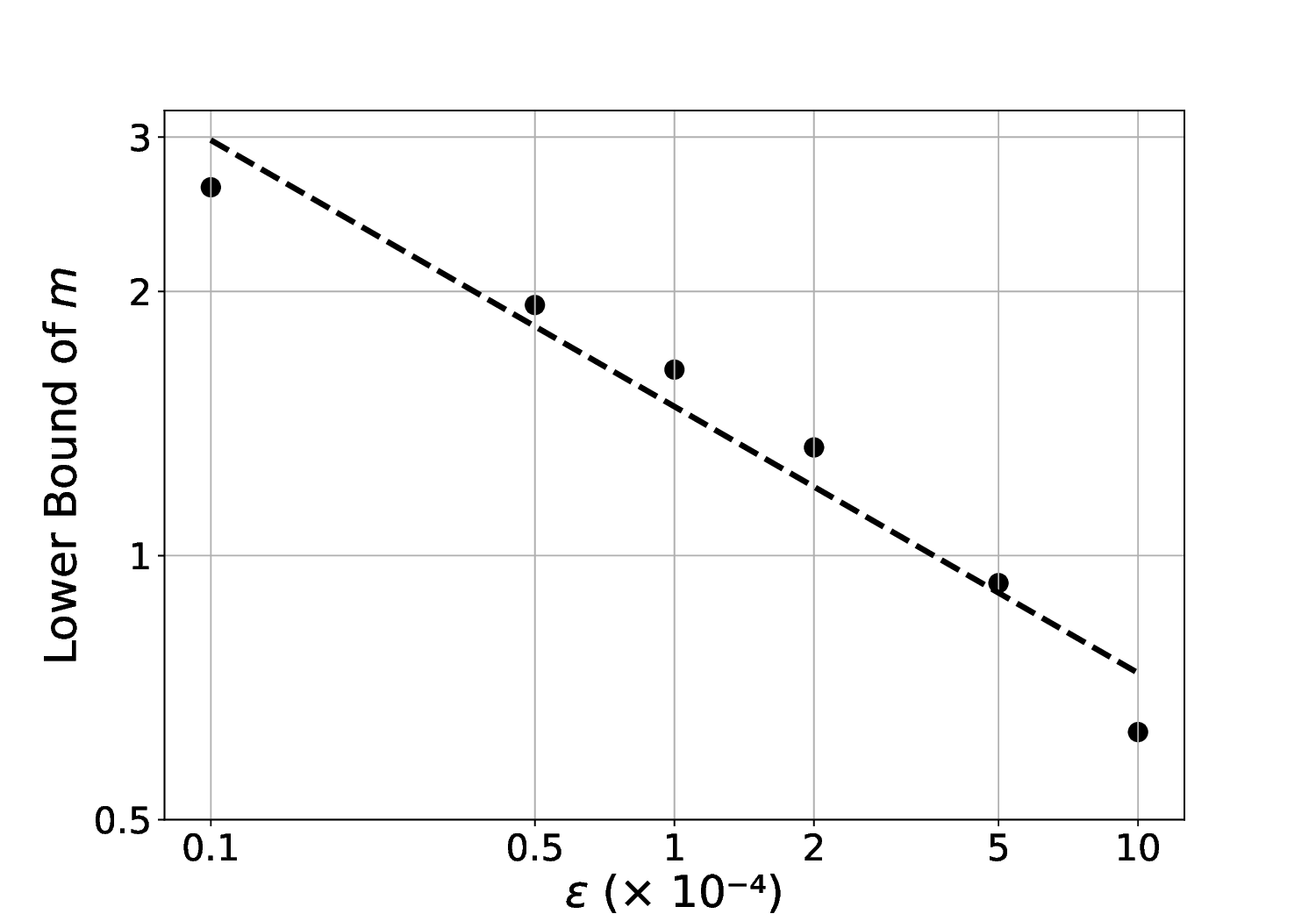}
        \caption{}
        \label{subfig:vorticitytransport_truncationerror_phi}
    \end{subfigure}
    \caption{Lower bound of Corollary \ref{thm:truncation} vs. targeted approximation error of calculating (a) vorticity and (b) streamfunction in the vorticity transport equation}
    \label{fig:vorticitytransport_truncationvserror}
\end{figure}

The targeted approximation errors are set to $\epsilon_\omega = 0.0375$ for $\omega$ and $\epsilon_\phi = 0.0001875$ for $\phi$.  Both targeted errors are 5\% of 0.75 and 0.00375, which are the largest absolute values of the initial $\omega$ and $\phi$ fields, respectively.  The values of $\epsilon_\omega$ and $\epsilon_\phi$ are used to select the  homotopy order $m$ in Figure \ref{fig:vorticitytransport_truncationvserror}, which illustrates a plot of the lower bound in Corollary \ref{thm:truncation}. It is assumed that the contractive ratio $q$ for both $\omega$ and $\phi$ is 0.1, which corresponds to fast convergences of homotopy series solutions towards the nonlinear solutions.  In Figures \ref{subfig:vorticitytransport_truncationerror_omega} and $\ref{subfig:vorticitytransport_truncationerror_phi}$, the lower bounds are obtained as 0.487 and 1.356.  Therefore, $m$ is set to 2.

\begin{figure}[h!] 
    \centering
    \includegraphics[width=0.5\linewidth]{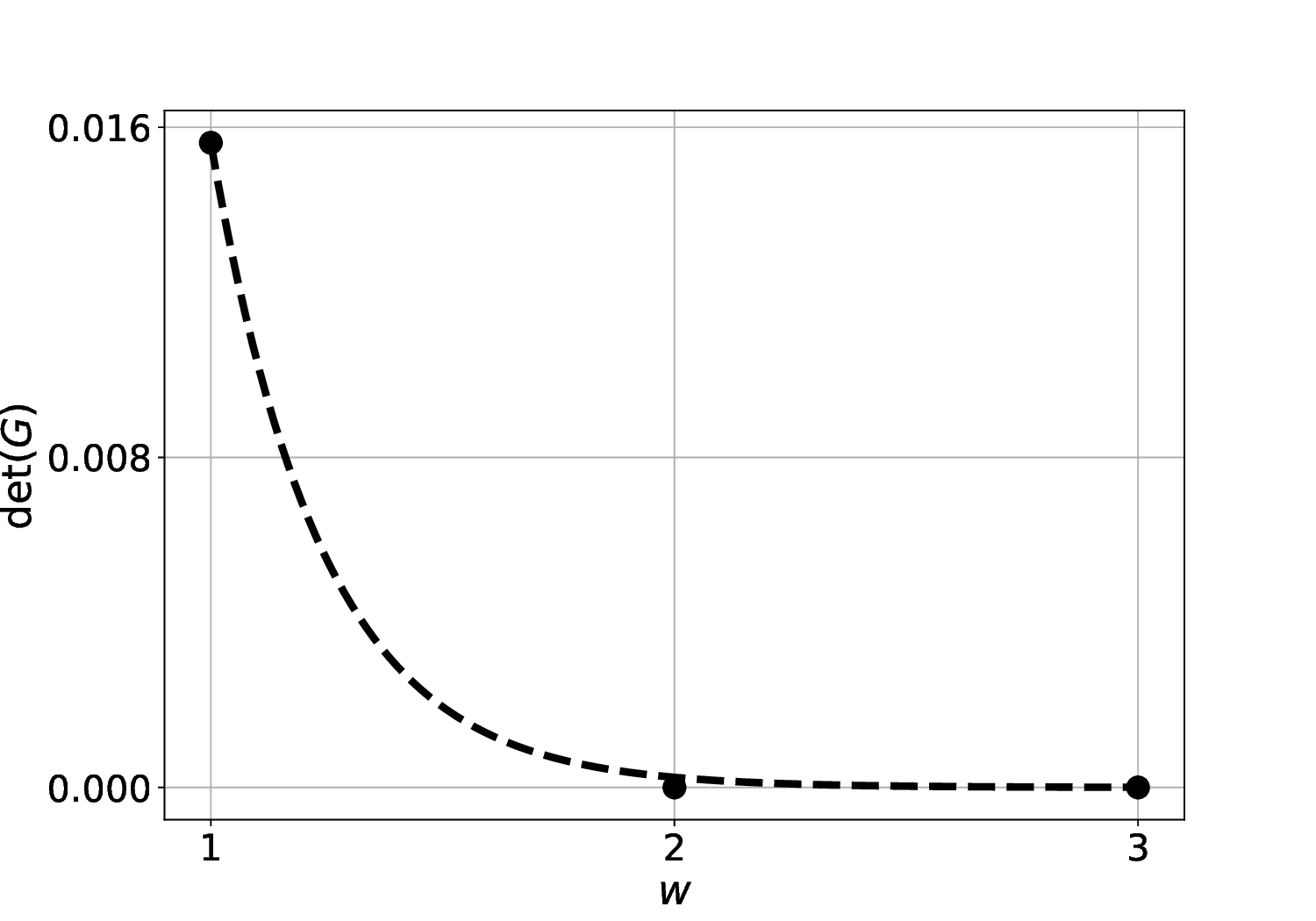}
    \caption{Estimated values of det$(G)$ at different circuit depths for the vorticity transport equation}
    \label{fig:vorticitytransport_fsvsdepth}
\end{figure}

The circuit depth $w$ is selected based on the determinant of the Fubini-Study metric $G$. For each value of $w$ from 1 to 3, $\det(G)$ is estimated as the average value of Eq. (\ref{eq:detg}) from 500 random samples of $\boldsymbol{\theta}$. The estimated values of $\det(G)$ are depicted as black dots in Figure \ref{fig:vorticitytransport_fsvsdepth}. According to the dashed curve, $\det(G)$ exponentially decreases from 0.0156 to 0 as $w$ increases from 1 to 2. To minimize the number of redundant circuit parameters, $w$ is set to 1.

\begin{figure}[h!]
    \centering
    \begin{subfigure}{0.32\textwidth}
        \centering
        \includegraphics[width=\linewidth, trim={1.5cm 1.5cm 1.5cm 1.5cm}, clip]{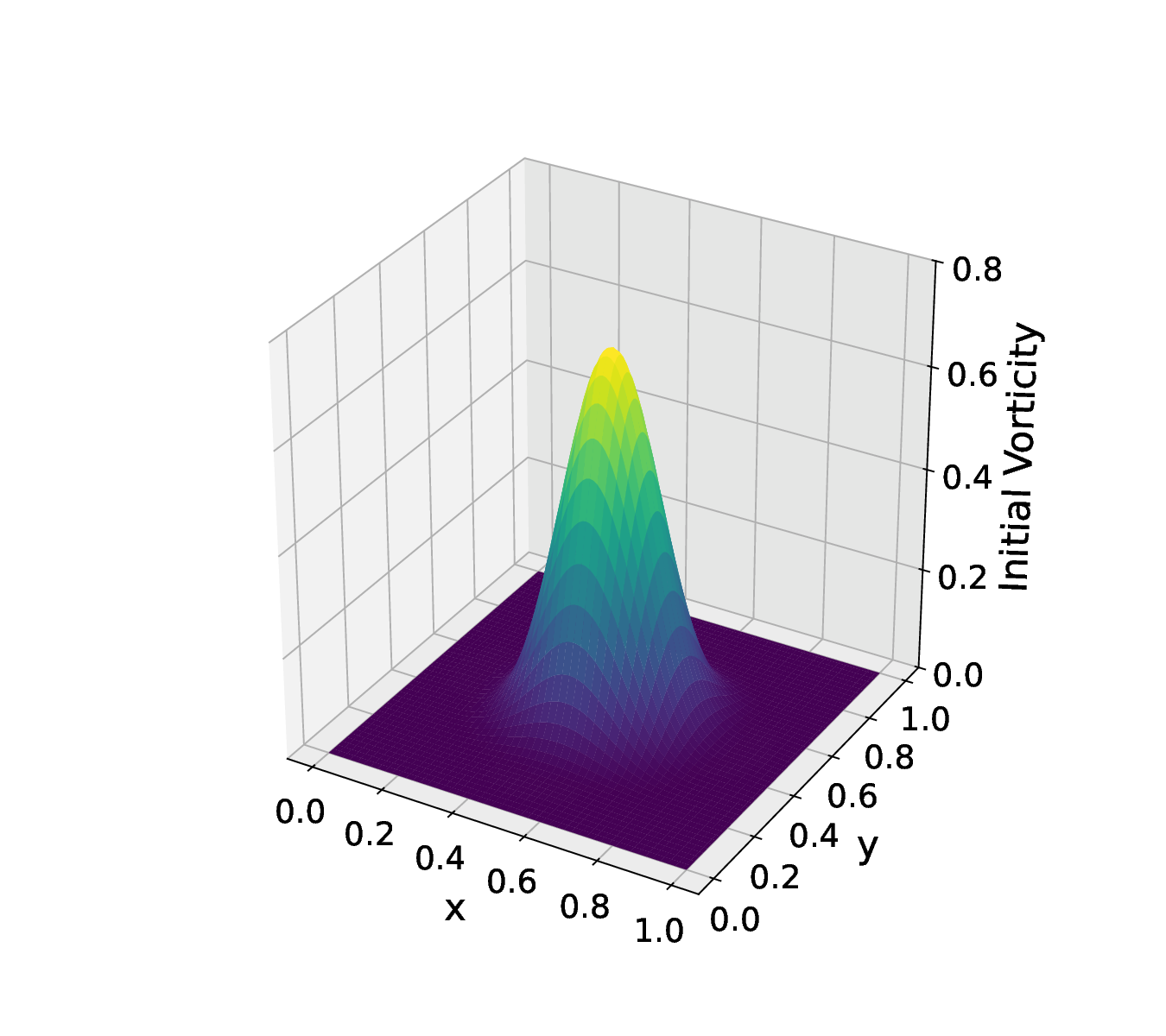}
        \caption{}
        \label{subfig:vorticitytransport_initialvorticity}
    \end{subfigure}
    \begin{subfigure}{0.32\textwidth}
        \centering
        \includegraphics[width=\linewidth, trim={1.5cm 1.5cm 1.5cm 1.5cm}, clip]{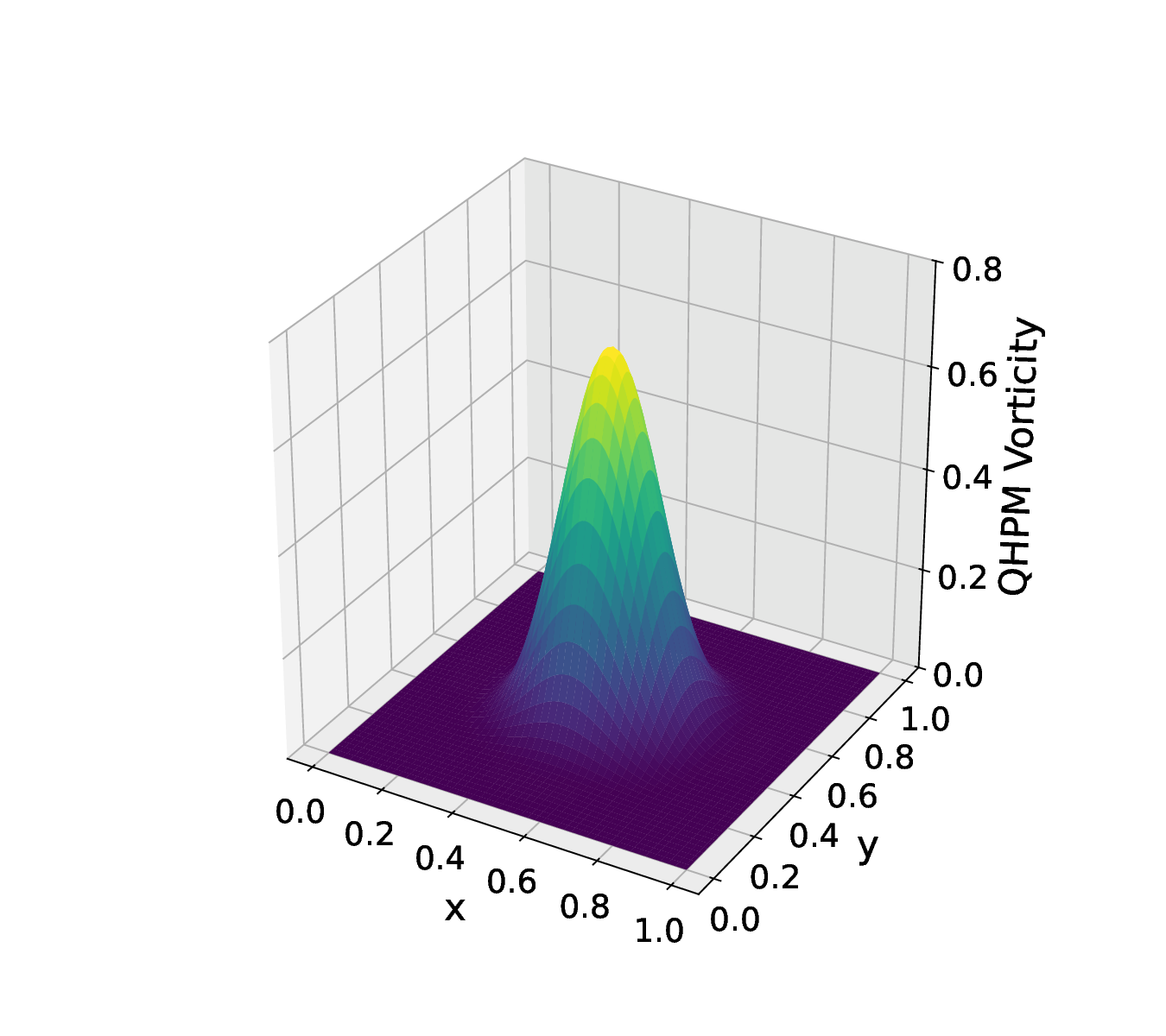}
        \caption{}
        \label{subfig:vorticitytransport_finalvorticity_qhpm}
    \end{subfigure}
    \begin{subfigure}{0.32\textwidth}
        \centering
        \includegraphics[width=\linewidth, trim={1.5cm 1.5cm 1.5cm 1.5cm}, clip]{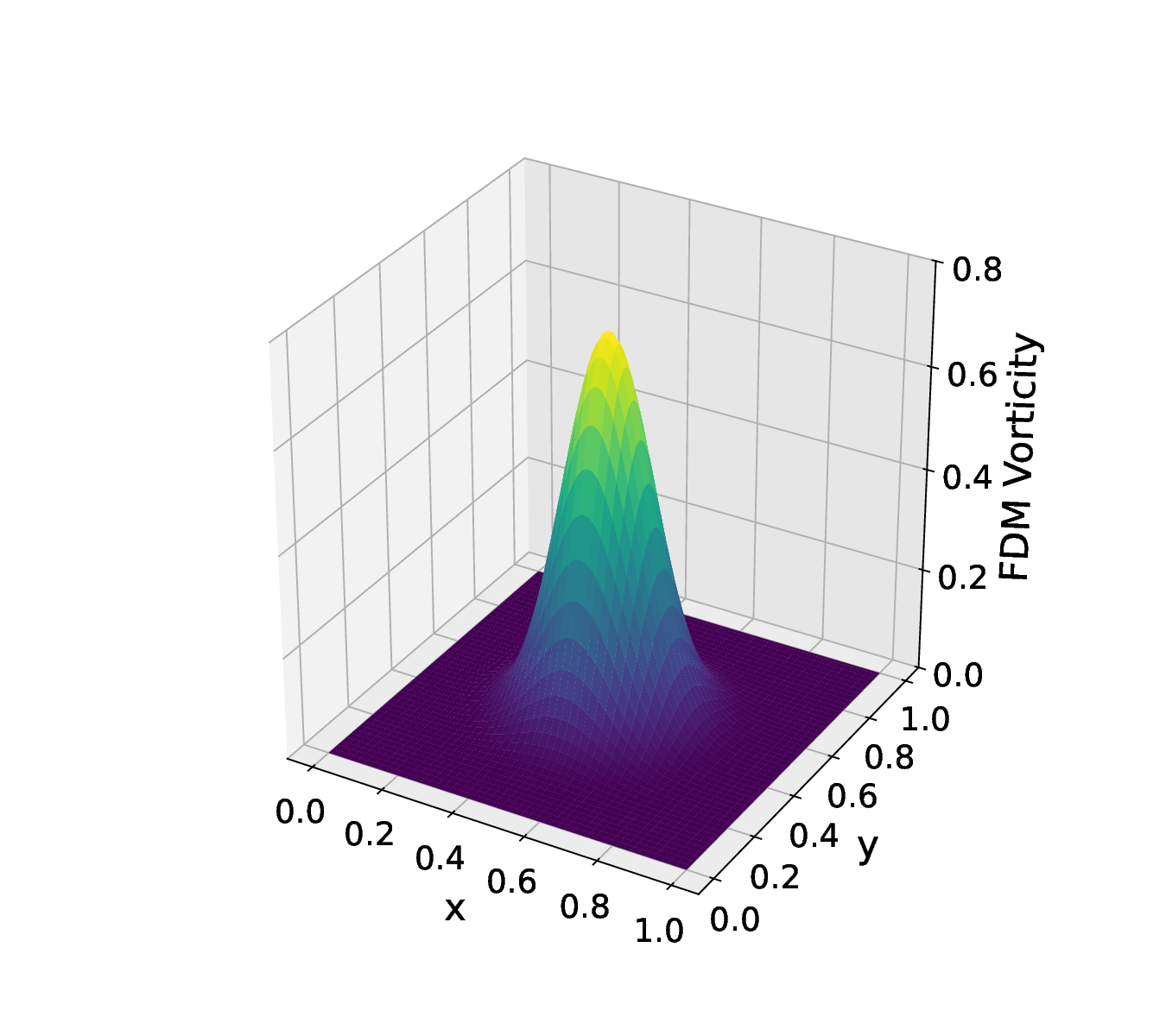}
        \caption{}
        \label{subfig:vorticitytransport_finalvorticityfdm}
    \end{subfigure}
    \begin{subfigure}{0.32\textwidth}
        \centering
        \includegraphics[width=\linewidth, trim={0.75cm 1.5cm 0.75cm 1.5cm}, clip]{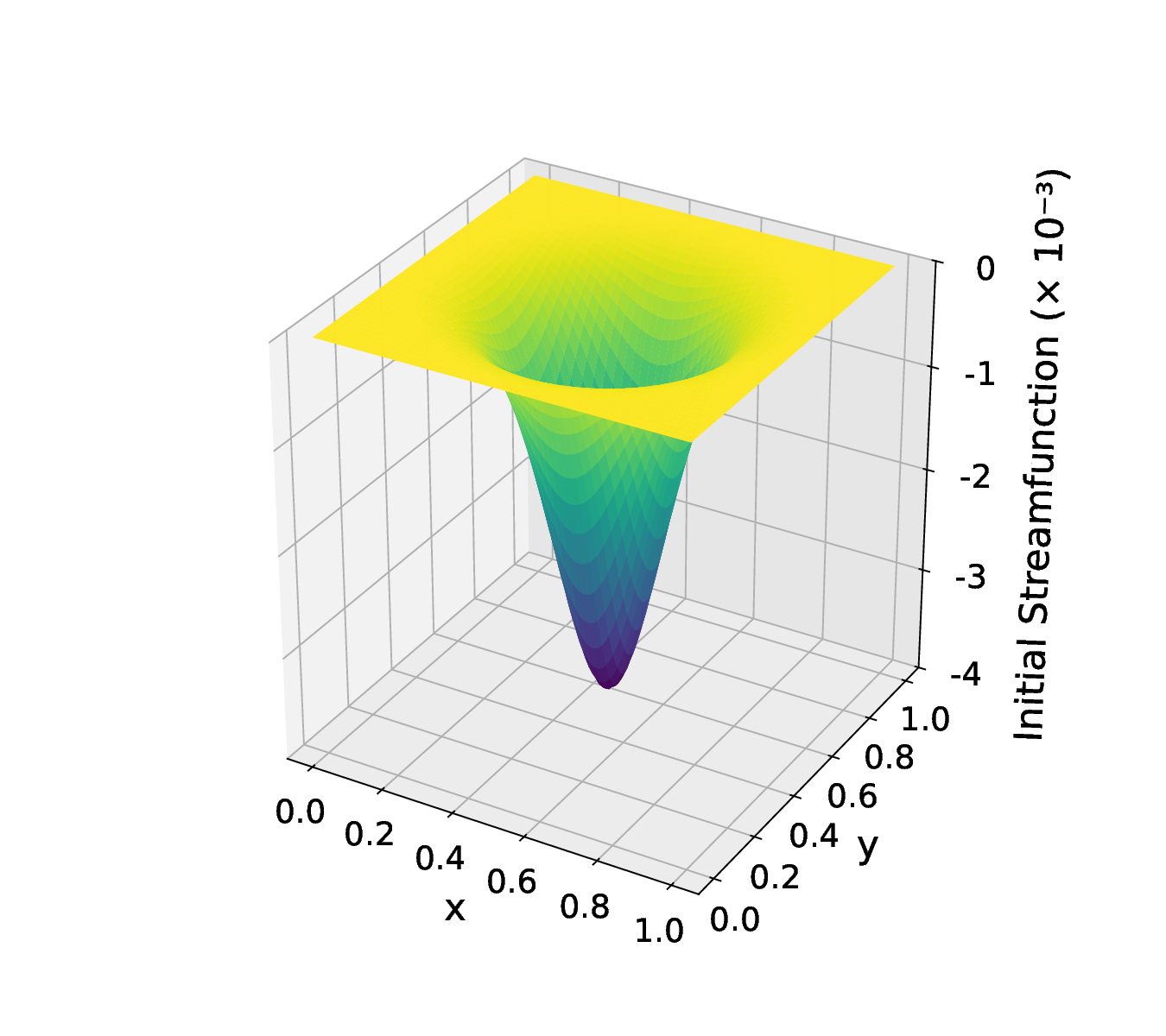}
        \caption{}
        \label{subfig:vorticitytransport_initialstreamfunction}
    \end{subfigure}
    \begin{subfigure}{0.32\textwidth}
        \centering
        \includegraphics[width=\linewidth, trim={0.75cm 1.5cm 0.75cm 1.5cm}, clip]{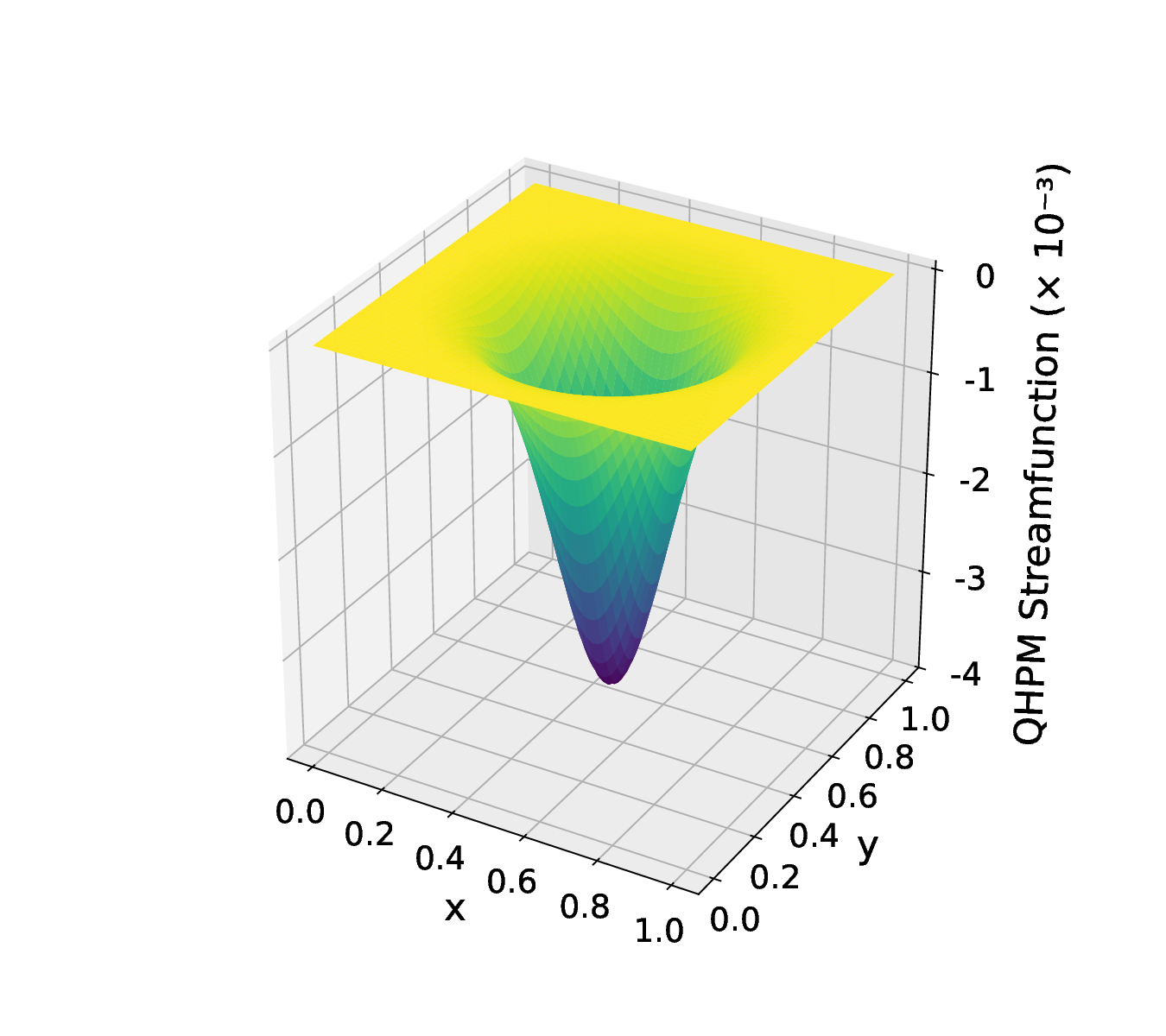}
        \caption{}
        \label{subfig:vorticitytransport_finalstreamfunctionqhpm}
    \end{subfigure}
    \begin{subfigure}{0.32\textwidth}
        \centering
        \includegraphics[width=\linewidth, trim={0.75cm 1.5cm 0.75cm 1.5cm}, clip]{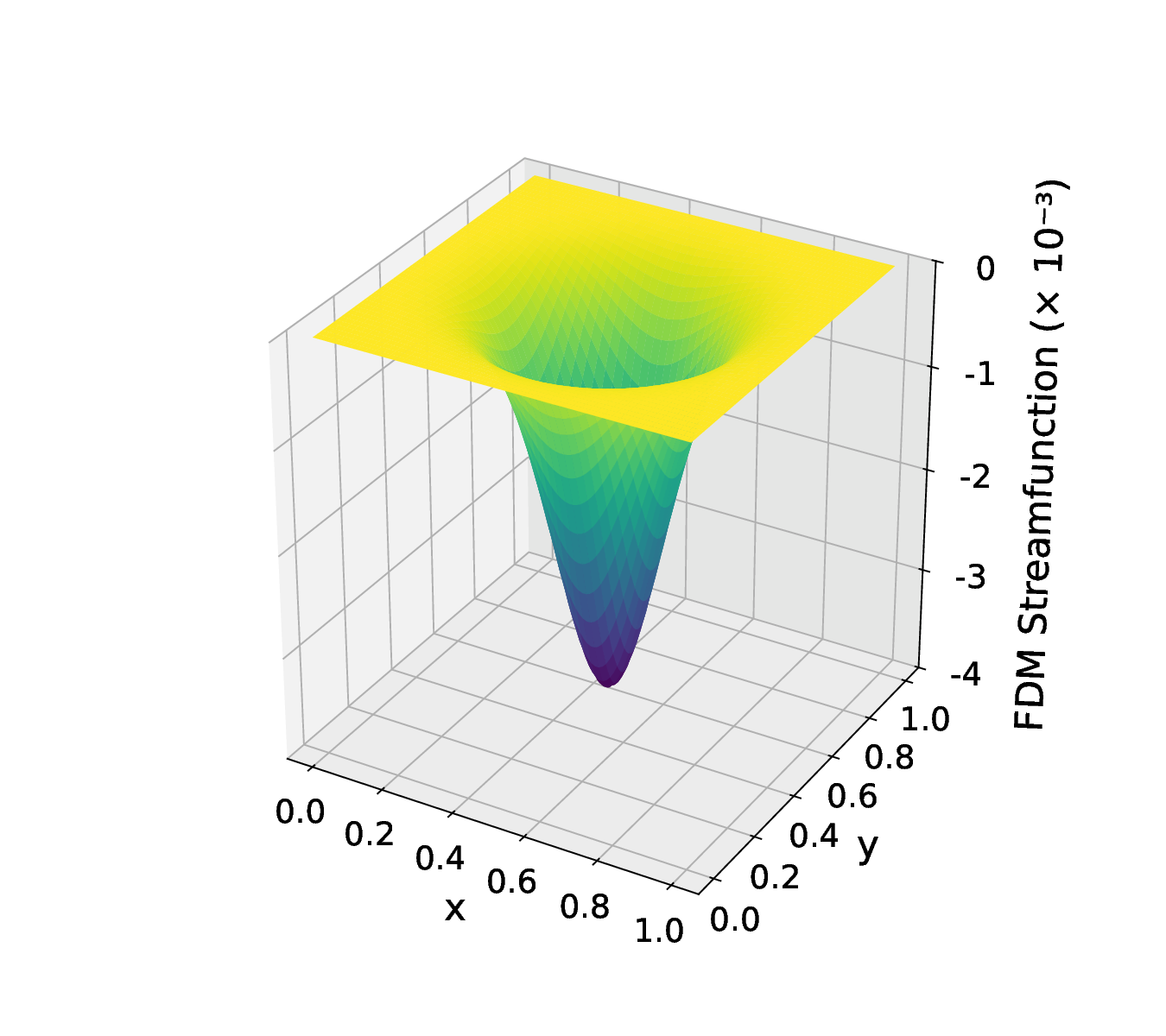}
        \caption{}
        \label{subfig:vorticitytransport_finalstreamfunctionfdm}
    \end{subfigure}
    \caption{Vorticity and streamfunction fields for the vorticity transport equation, including (a, d) initial condition at t = 0, (b, e) QHPM results at t = 0.001 s, and (c, f) FDM results at t = 0.001 s}
    \label{fig:vorticitytransport_finalfields}
\end{figure}

\begin{figure}[h!] \label{fig:NSvorticityerrorfields}
    \centering
    \begin{subfigure}{0.45\textwidth}
        \centering
        \includegraphics[width=\linewidth, trim={0.25cm 1.5cm 0.25cm 1.5cm}, clip]{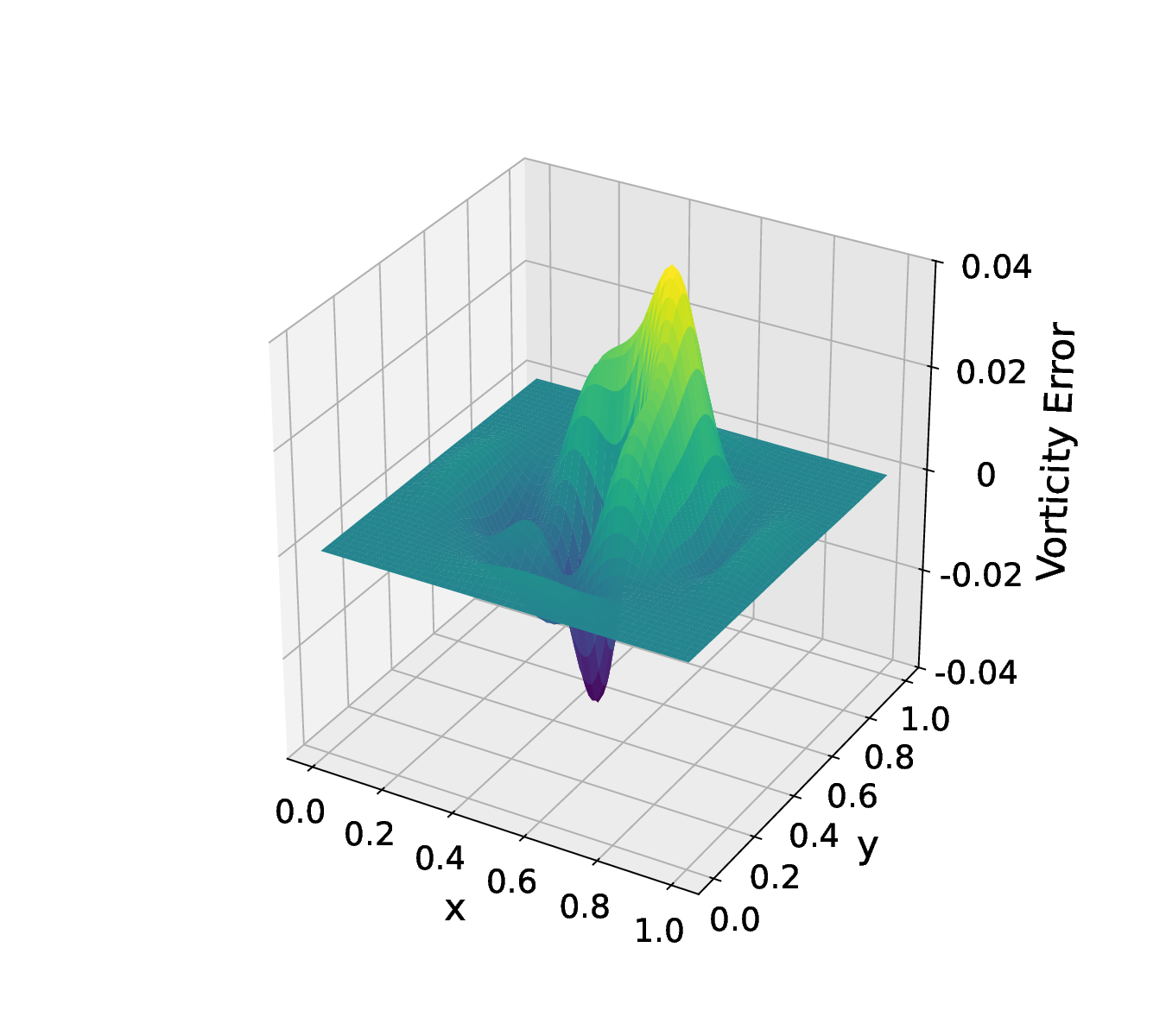}
        \caption{}
        \label{subfig:vorticitytransport_vorticityerror}
    \end{subfigure}
    \begin{subfigure}{0.45\textwidth}
        \centering
        \includegraphics[width=\linewidth, trim={0.25cm 1.5cm 0.25cm 1.5cm}, clip]{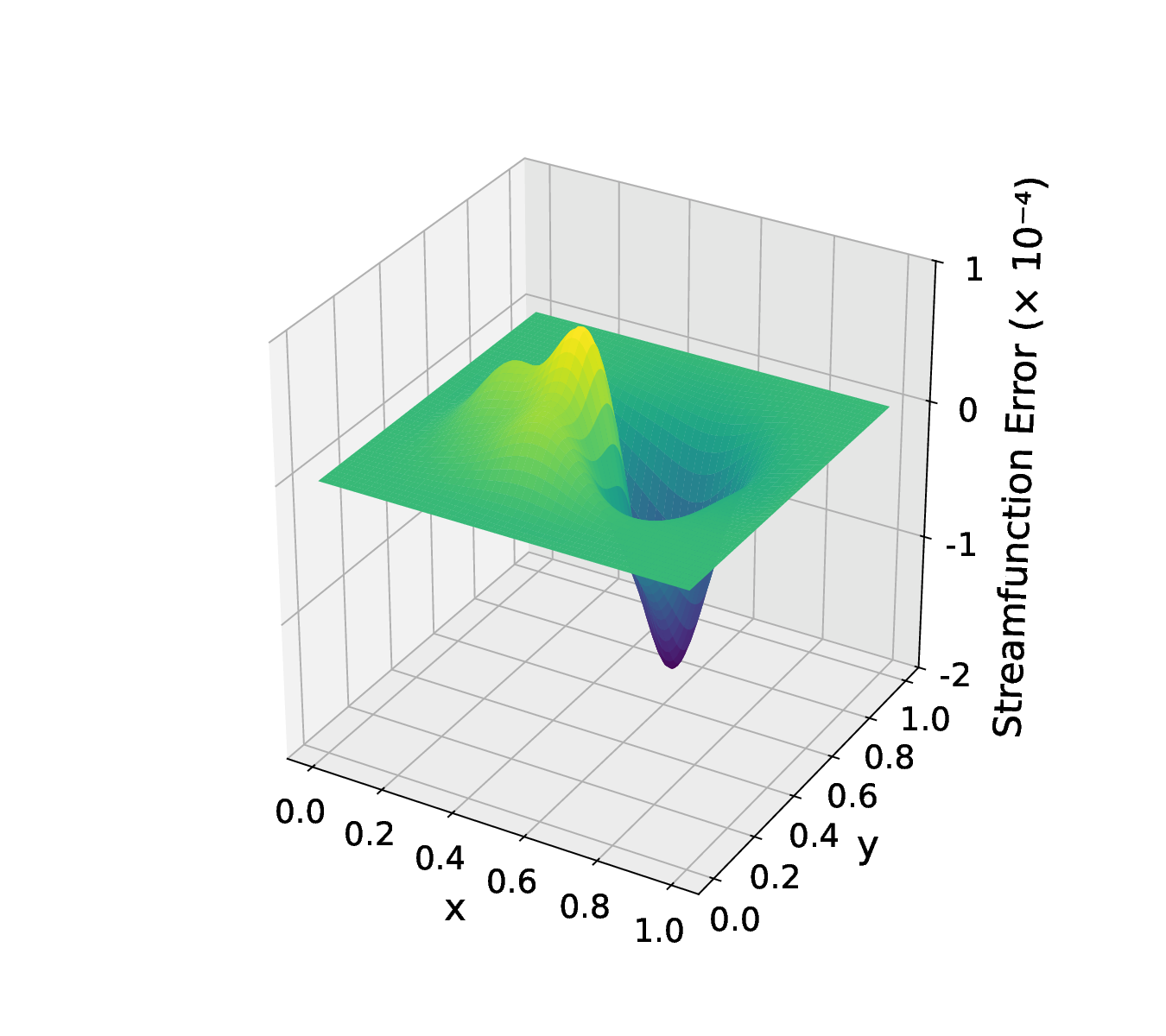}
        \caption{}
        \label{subfig:vorticitytransport_streamfunctionerror}
    \end{subfigure}
    
    \caption{(a) Vorticity and (b) streamfunction errors of solving vorticity transport equation with QHPM relative to FDM}
    \label{fig:vorticitytransport_errorfields}
\end{figure}

The vorticity transport equation is solved with QHPM and FDM for the purpose of comparing solution fields, which are illustrated in Figure \ref{fig:vorticitytransport_finalfields}.  Given the initial conditions for $\omega$ and $\phi$ in Figures \ref{subfig:vorticitytransport_initialvorticity} and \ref{subfig:vorticitytransport_initialstreamfunction}, both methods result in highly similar values of $\omega$ at $t = 0.001$ in Figures \ref{subfig:vorticitytransport_finalvorticity_qhpm} and \ref{subfig:vorticitytransport_finalvorticityfdm}, and $\phi$ at $t = 0.001$ in Figures \ref{subfig:vorticitytransport_finalstreamfunctionqhpm} and \ref{subfig:vorticitytransport_finalstreamfunctionfdm}.  According to the vorticity error plot in Figure \ref{subfig:vorticitytransport_vorticityerror}, the absolute differences between the QHPM and FDM vorticities at most grid points are at most 0.0375, which is the selected value of $\epsilon_\omega$. Only 88 out of 10,000 grid points result in absolute differences which are larger than $\epsilon_\omega$, where the maximum absolute difference is 0.045. In the streamfunction error plot in Figure \ref{subfig:vorticitytransport_streamfunctionerror}, the maximum absolute difference between all grid points is 0.00017, which is less than $\epsilon_\phi$.

The convergence behavior of QHPM is also analyzed.  Figures \ref{subfig:vorticitytransport_averageabsolutevorticityerror} and \ref{subfig:vorticitytransport_averageabsolutestreamfunctionerror} illustrate the average absolute errors for $\omega$ and $\phi$ at different $m$ and $t$.  The average errors are defined as the average values of absolute differences at all 10,000 grid points.   At each $t$, the average absolute errors for $\omega$ and $\phi$ initially decrease from $m = 1$ to $m = 2$, then remain constant from $m = 2$ to $m = 3$. This means that the choice of $m = 2$ is optimal for maximizing approximation accuracy and minimizing the number of linear deformation equations to solve.

\begin{figure}[h!] 
    \centering
    \begin{subfigure}{0.45\textwidth}
        \centering
        \includegraphics[width=\linewidth]{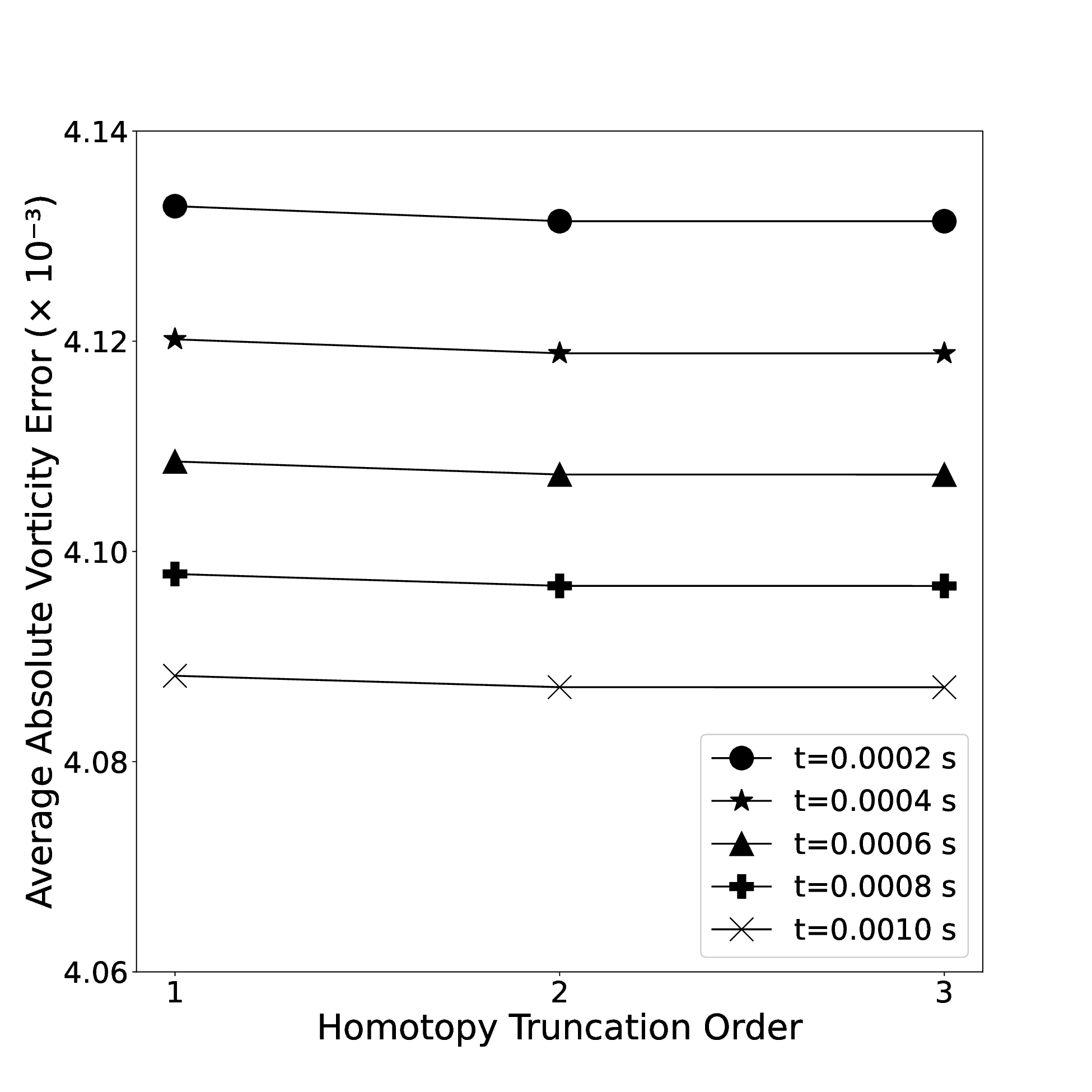}
        \caption{}
        \label{subfig:vorticitytransport_averageabsolutevorticityerror}
    \end{subfigure}
    \begin{subfigure}{0.45\textwidth}
        \centering
        \includegraphics[width=\linewidth]{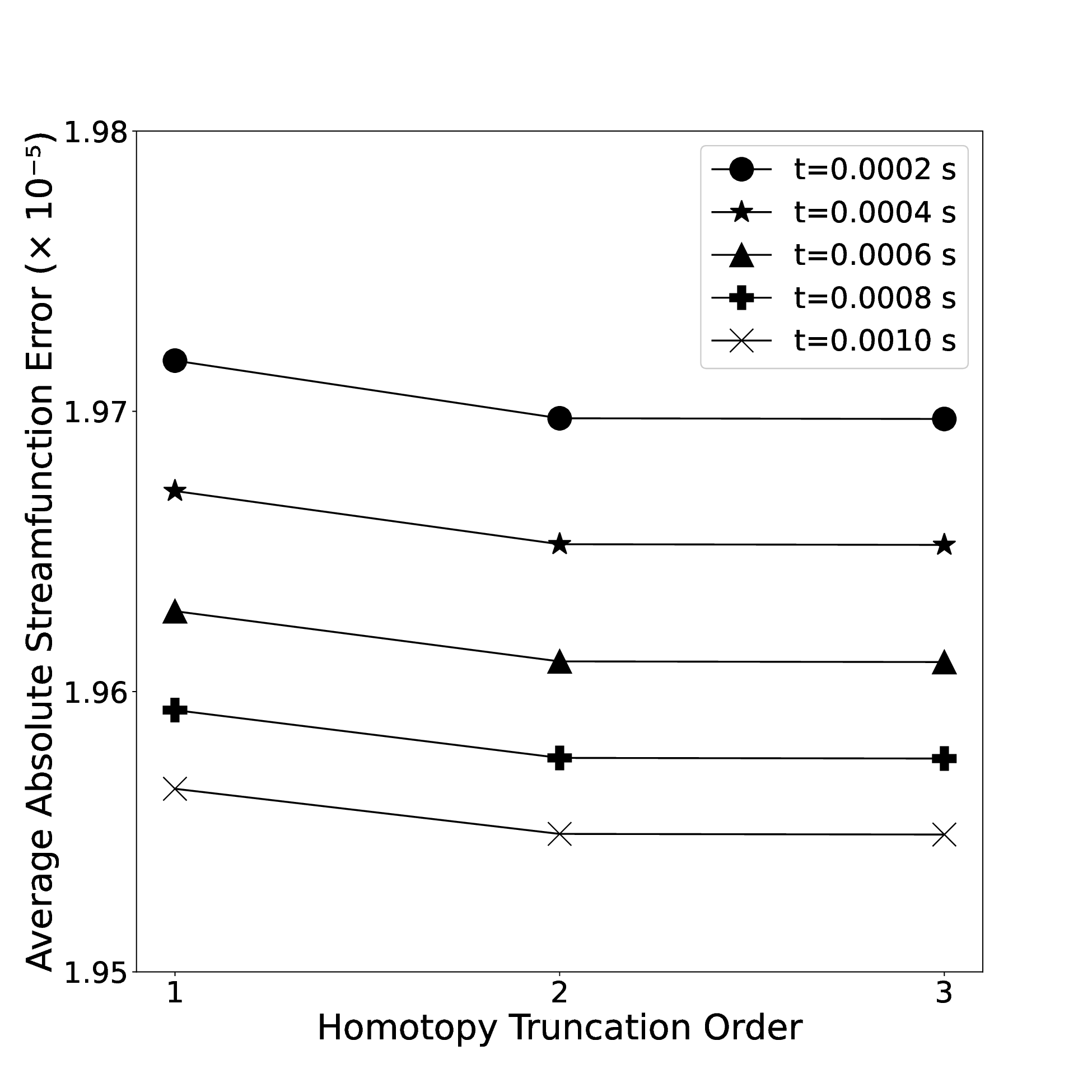}
        \caption{}
        \label{subfig:vorticitytransport_averageabsolutestreamfunctionerror}
    \end{subfigure}
    
    \caption{(a) Average vorticity and (b) streamfunction errors of solving vorticity transport equation with QHPM relative to FDM for different homotopy orders}
    \label{fig:vorticitytransport_averageabsoluteerrors}
\end{figure}

\subsection{Example \#2: Reduced Magnetohydrodynamics}
The dynamics of an electrically conductive fluid is described with reduced magnetohydrodynamics (MHD). The reduced MHD equations are a system of two nonlinear PDEs and two linear PDEs formulated as
\begin{equation} \label{eq:MHD}
\begin{aligned}
&\frac{\partial 
{\omega}}{\partial t}  = \nu\nabla^2{\omega} + \frac{\partial \mu}{\partial x}\frac{\partial \xi}{\partial y} - \frac{\partial \mu}{\partial y}\frac{\partial \xi}{\partial x} - \frac{\partial \phi}{\partial x}\frac{\partial \omega}{\partial y} + \frac{\partial \phi}{\partial y}\frac{\partial \omega}{\partial x} \\
&\frac{\partial {\mu}}{\partial t}  =  \eta\nabla^2{\mu} - \frac{\partial \phi}{\partial x}\frac{\partial \mu}{\partial y} + \frac{\partial \phi}{\partial y}\frac{\partial \mu}{\partial x} \\
&\nabla^2 {\phi} + {\omega} =  0 \\
&\nabla^2 {\mu} + \xi = 0 \\
\end{aligned},
\end{equation}
where $\omega$, $\mu$, $\phi$, $\xi$, $\nu$, and $\eta$ are the vorticity,  magnetic potential, streamfunction, current density, kinematic viscosity, and resistivity, respectively.  The first two PDEs in Eq. (\ref{eq:MHD}) are combined into a single system of nonlinear PDEs, where
\begin{equation}
\label{eq:combinedMHD}
    \frac{\partial}{\partial t}
    \begin{bmatrix}
    {\omega} \\
    {\mu}
    \end{bmatrix} =
    \begin{bmatrix}
    \nu \nabla^2 & {0} \\
    {0} & \eta \nabla^2 \\
    \end{bmatrix}
    \begin{bmatrix}
    {\omega} \\
    {\mu}
    \end{bmatrix} + 
    \begin{bmatrix}
    \frac{\partial {\mu}}{\partial x} \frac{\partial {\xi}}{\partial y} - \frac{\partial {\mu}}{\partial y} \frac{\partial {\xi}}{\partial x} - \frac{\partial {\phi}}{\partial x} \frac{\partial {\omega}}{\partial y} + \frac{\partial {\phi}}{\partial y} \frac{\partial {\omega}}{\partial x} \\
    \frac{\partial {\phi}}{\partial y} \frac{\partial {\mu}}{\partial x} - \frac{\partial {\phi}}{\partial x} \frac{\partial {\mu}}{\partial y}
    \end{bmatrix}.
\end{equation} 

In this example, $\omega$ and $\mu$ are computed at multiple points arranged into a square grid with a side length of 1 m.  The values of $\omega$ and $\mu$ in an $8 \times 8$ grid are obtained with QHPM.  The initial vorticity and magnetic potential fields are both $\omega(x,y,t=0) = \mu(x,y,t=0) = 0.25\exp\left(-50 \left((x-0.5)^2 + (y-0.5)^2 \right)\right)$. The combined PDE system in Eq. (\ref{eq:combinedMHD}) is solved for 5 time steps which each occur for $\Delta t = 0.0002$ s. At each time step, the  homotopy series terms $\omega_v^{(j)}$'s and $\mu_v^{(j)}$'s are calculated by solving the initial linear deformation equation 
\begin{equation} \label{eq:zerothmhddeformationequation}
    \frac{\partial}{\partial t}
    \begin{bmatrix}
    {\omega}_v^{(0)} \\
    {\mu}_v^{(0)}
    \end{bmatrix} =
    \begin{bmatrix}
    \nu \nabla^2 & 0 \\
    0 & \eta \nabla^2 \\
    \end{bmatrix}
    \begin{bmatrix}
    {\omega}_v^{(0)} \\
    {\mu}_v^{(0)}
    \end{bmatrix}
\end{equation}
and $m$
higher-order linear deformation equations of the form
\begin{equation}
    \frac{\partial}{\partial t}
    \begin{bmatrix}
    {\omega}_v^{(j)} \\
    {\mu}_v^{(j)}
    \end{bmatrix} =
    \begin{bmatrix}
    \nu \nabla^2 & {0} \\
    {0} & \eta \nabla^2 \\
    \end{bmatrix}
    \begin{bmatrix}
    {\omega}_v^{(j)} \\
    {\mu}_v^{(j)}
    \end{bmatrix} + \sum_{k=1}^j
    \begin{bmatrix}
    \frac{\partial {\mu}_v^{(k-1)}}{\partial x} \frac{\partial {\xi}_v^{(j-k)}}{\partial y} - \frac{\partial {\mu}_v^{(k-1)}}{\partial y} \frac{\partial {\xi}_v^{(j-k)}}{\partial x} - \frac{\partial {\phi}_v^{(k-1)}}{\partial x} \frac{\partial {\omega}_v^{(j-k)}}{\partial y} + \frac{\partial {\phi}_v^{(k-1)}}{\partial y} \frac{\partial {\omega}_v^{(j-k)}}{\partial x} \\
    \frac{\partial {\phi}_v^{(k-1)}}{\partial y} \frac{\partial {\mu}_v^{(j-k)}}{\partial x} - \frac{\partial {\phi}_v^{(k-1)}}{\partial x} \frac{\partial {\mu}_v^{(j-k)}}{\partial y}
    \end{bmatrix}.
\end{equation}
The solutions ${\omega}_v^{(0)}(x,y,t=\Delta t)$ and ${\mu}_v^{(0)}(x,y,t=\Delta t)$ are first obtained by solving Eq. (\ref{eq:zerothmhddeformationequation}) with VQS, where ${\omega}_v^{(0)}(x,y,t=0)$ and ${\mu}_v^{(0)}(x,y,t=0)$ are set to the original initial conditions ${\omega}(x,y,t=0)$ and ${\mu}(x,y,t=0)$, respectively. Next,
the problem of solving the $j$\textsuperscript{th}-order deformation equation is decomposed into two sub-problems of solving the steady-state PDE
\begin{equation}
    \begin{bmatrix}
    \nu \nabla^2 & 0 \\
    0 & \eta \nabla^2 \\
    \end{bmatrix}
    \begin{bmatrix}
    \hat{{\omega}}^{(j)} \\
    \hat{{\mu}}^{(j)}
    \end{bmatrix} =
    \sum_{k=1}^j
    \begin{bmatrix}
    \frac{\partial \hat{{\phi}}^{(k-1)}}{\partial x} \frac{\partial \hat{{\omega}}^{(j-k)}}{\partial y} - \frac{\partial \hat{{\phi}}^{(k-1)}}{\partial y} \frac{\partial \hat{{\omega}}^{(j-k)}}{\partial x} - \frac{\partial \hat{{\mu}}^{(k-1)}}{\partial x} \frac{\partial \hat{{\xi}}^{(j-k)}}{\partial y} + \frac{\partial \hat{{\mu}}^{(k-1)}}{\partial y} \frac{\partial \hat{{\xi}}^{(j-k)}}{\partial x} \\
    \frac{\partial \hat{{\phi}}^{(k-1)}}{\partial x} \frac{\partial \hat{{\mu}}^{(j-k)}}{\partial y}
    \frac{\partial \hat{{\phi}}^{(k-1)}}{\partial x} \frac{\partial \hat{{\mu}}^{(j-k)}}{\partial y}
    \end{bmatrix}
\end{equation}
and the homogeneous time-dependent PDE
\begin{equation}
    \frac{\partial}{\partial t}
    \begin{bmatrix}
    \Tilde{{\omega}}^{(j)} \\
    \Tilde{{\mu}}^{(j)}
    \end{bmatrix} =
    \begin{bmatrix}
    \nu \nabla^2 & {0} \\
    {0} & \eta \nabla^2 \\
    \end{bmatrix}
    \begin{bmatrix}
    \Tilde{{\omega}}^{(j)} \\
    \Tilde{{\mu}}^{(j)}
    \end{bmatrix}.
\end{equation}
In the steady-state PDE, $\hat{{\phi}}$ and $\hat{{\xi}}$ are obtained by solving the third and fourth PDEs in Eq. (\ref{eq:MHD}) with FDM.  After the linear deformation equations are solved, the nonlinear solutions $\omega$ and $\mu$ on the $8 \times 8$ grid are approximated through Eq. (\ref{eq:homotopyseries}). The solution values on the $8 \times 8$ grid are subsequently used to interpolate $\omega$ and $\mu$ in a 100 $\times$ 100 grid through the Chebyshev spectral collocation method. 

\begin{figure}[h!]
        \centering
        \includegraphics[width=0.5\linewidth]{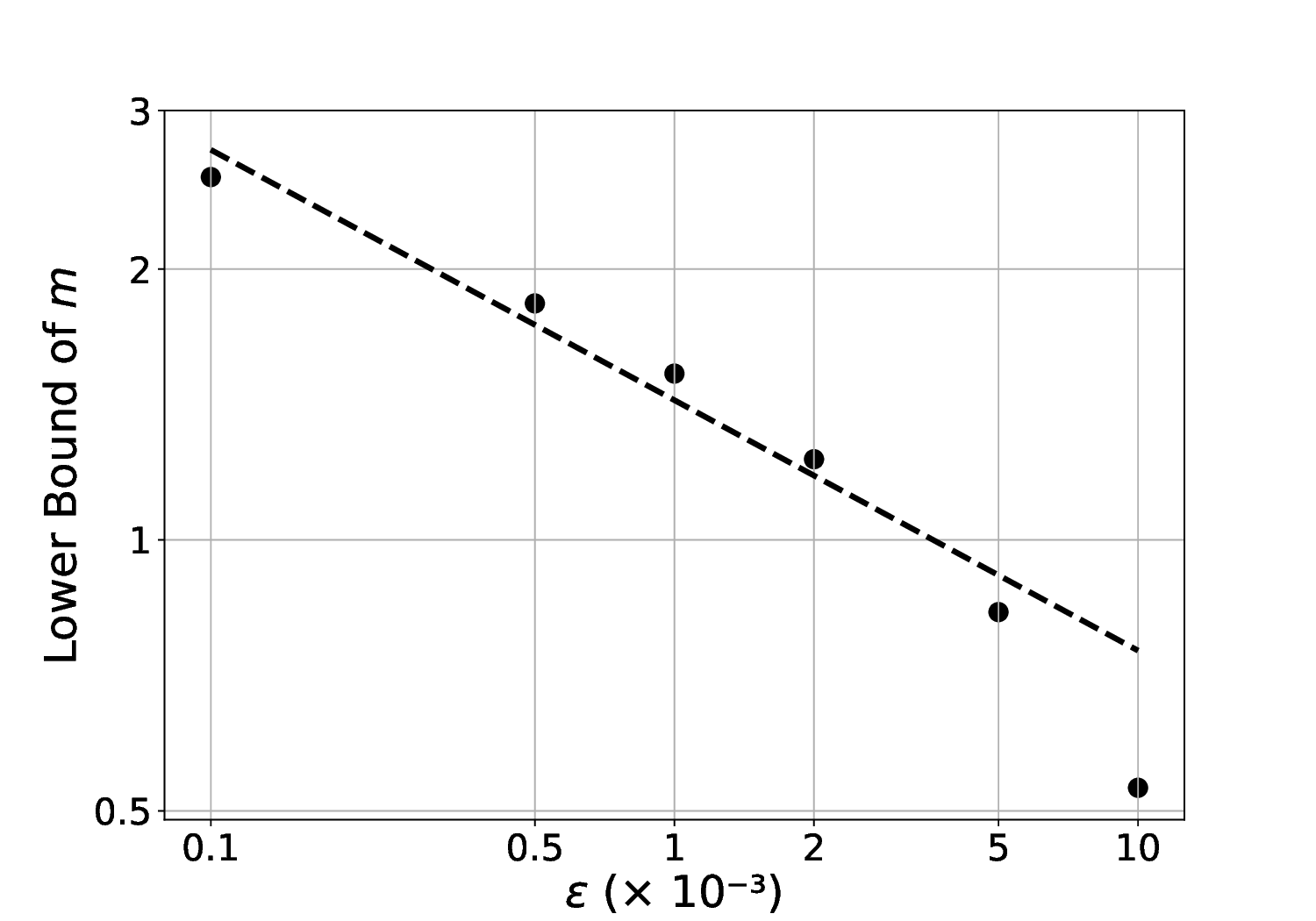}
        \caption{Lower bound of Corollary \ref{thm:truncation} vs. targeted approximation error of solving the reduced MHD equations}
        \label{fig:reducedMHD_truncationvserror}
\end{figure}

The targeted absolute differences for $\omega$ and $\mu$ are both selected as $\epsilon = 0.0125$.  This targeted error is 5\% of 0.25, which is the maximum absolute value of the initial $\omega$ and $\mu$ fields.  Assuming that $q = 0.1$, the lower bound of $m$ in Corollary \ref{thm:truncation} is 0.4335 as illustrated in Figure \ref{fig:reducedMHD_truncationvserror}.  Therefore, $m$ is set to 1.

\begin{figure}[h!]
    \centering
    \includegraphics[width=0.5\linewidth]{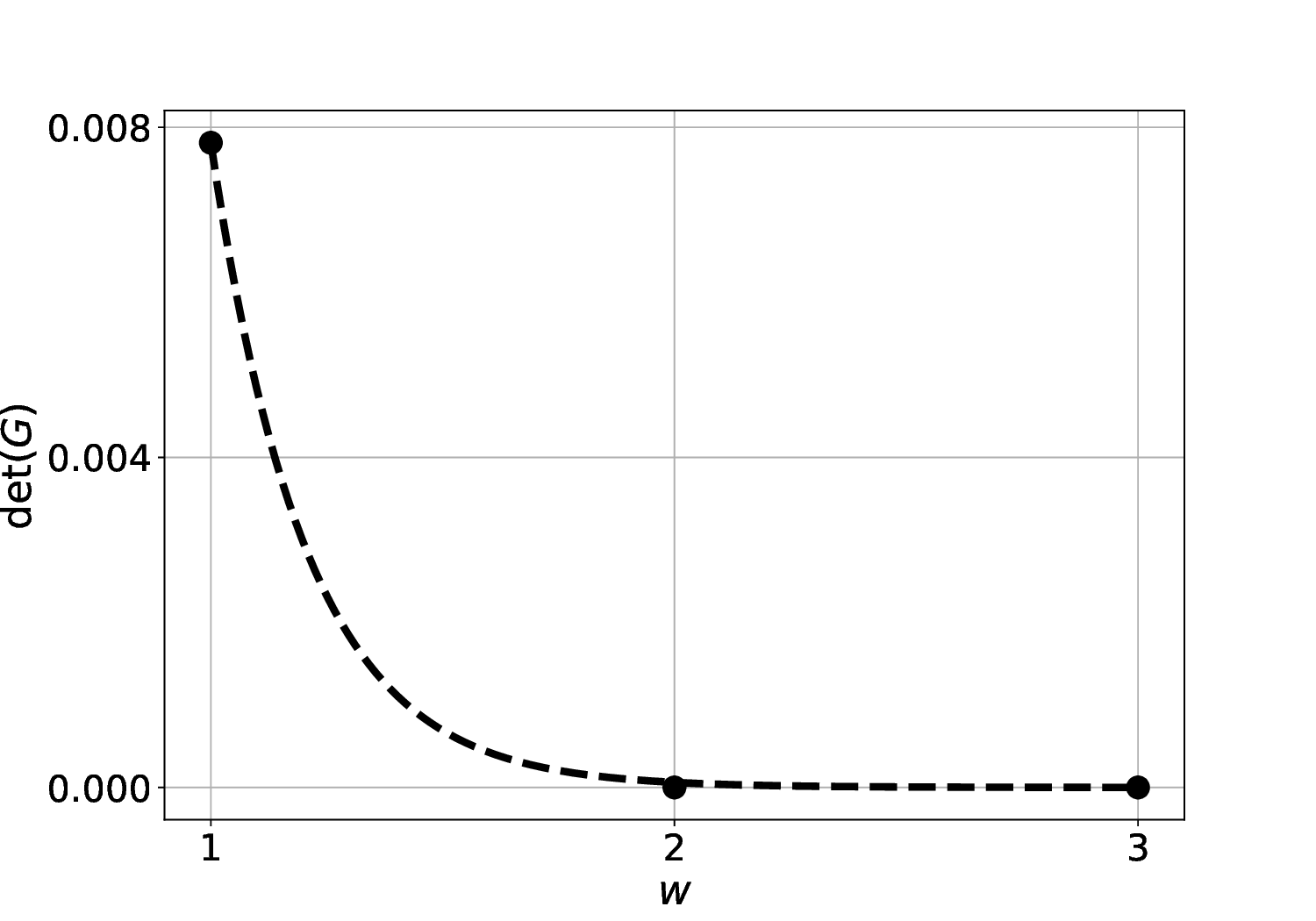}
    \caption{Estimated values of det($G$) at different circuit depths for the reduced MHD equations}
    \label{fig:reducedMHD_fsvsdepth}
\end{figure}

The estimated values of $\det(G)$ given different $w$ are shown as black dots in Figure \ref{fig:reducedMHD_fsvsdepth}. For each $w$ from 1 to 3, $\det(G)$ is estimated as the average of Eq. (\ref{eq:detg}) from 500 random samples of $\boldsymbol{\theta}$. As indicated by the dashed curve, $\det(G)$ decreases exponentially from 0.0078 to 0 as $w$ increases from 1 to 2. The number of redundant circuit parameters is minimized by setting $w$ to 1.

\begin{figure}[h!]
    \centering
    \begin{subfigure}{0.32\textwidth}
        \centering
        \includegraphics[width=\linewidth, trim={1.5cm 1.5cm 1.5cm 1.5cm}, clip]{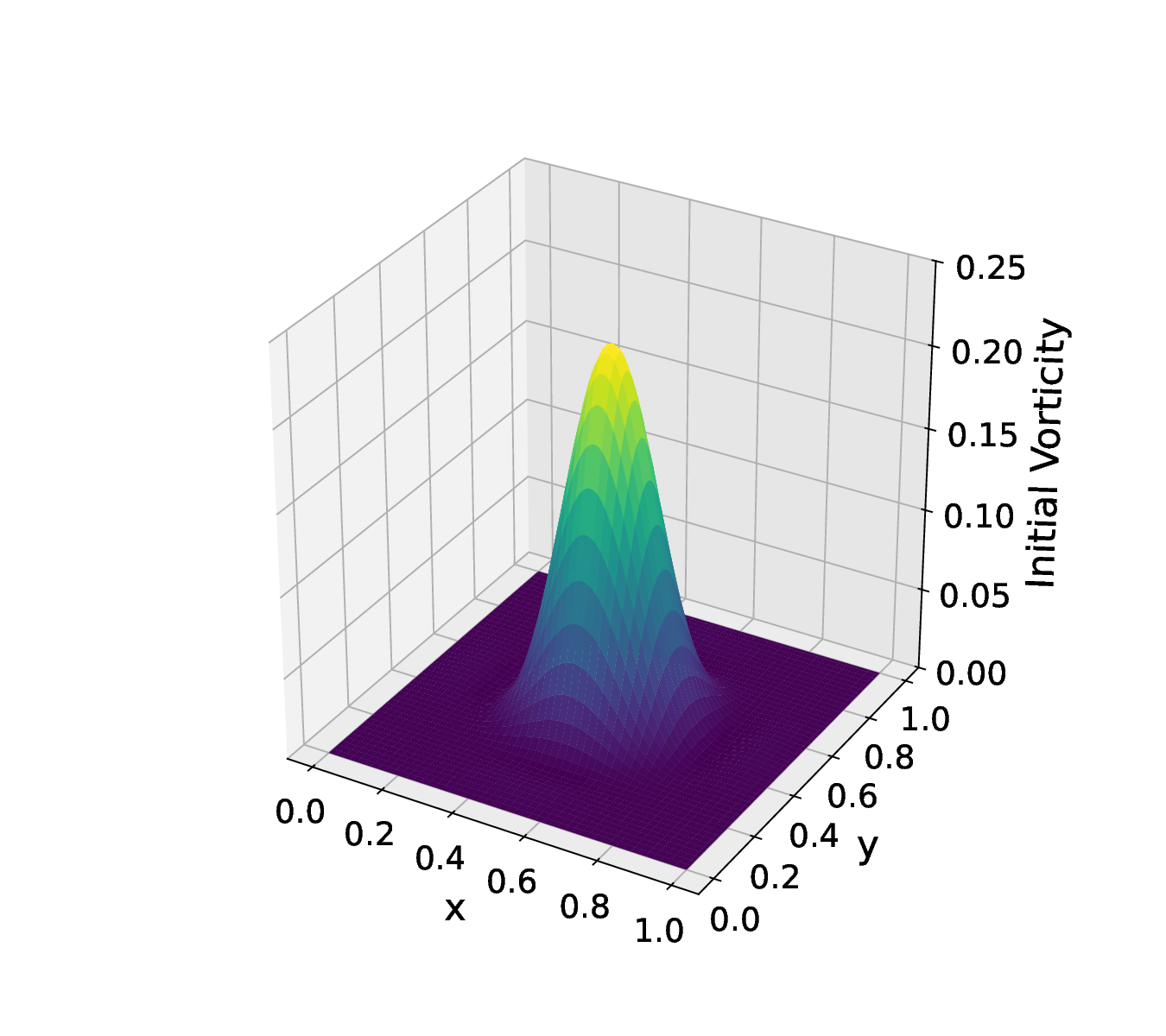}
        \caption{}
        \label{subfig:reducedMHD_initialvorticity}
    \end{subfigure}
    \begin{subfigure}{0.32\textwidth}
        \centering
        \includegraphics[width=\linewidth, trim={1.5cm 1.5cm 1.5cm 1.5cm}, clip]{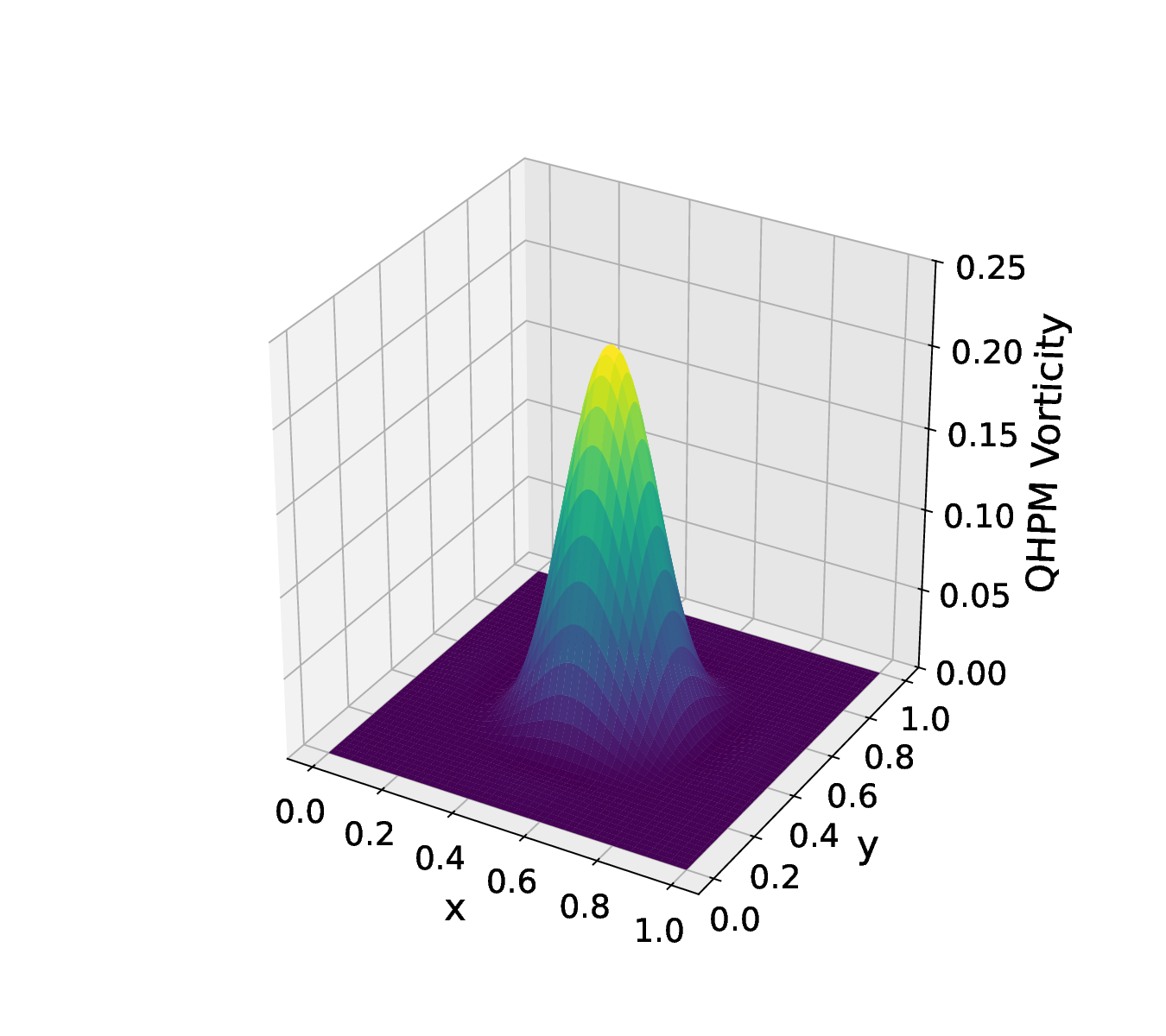}
        \caption{}
        \label{subfig:reducedMHD_finalvorticity_qhpm}
    \end{subfigure}
    \begin{subfigure}{0.32\textwidth}
        \centering
        \includegraphics[width=\linewidth, trim={1.5cm 1.5cm 1.5cm 1.5cm}, clip]{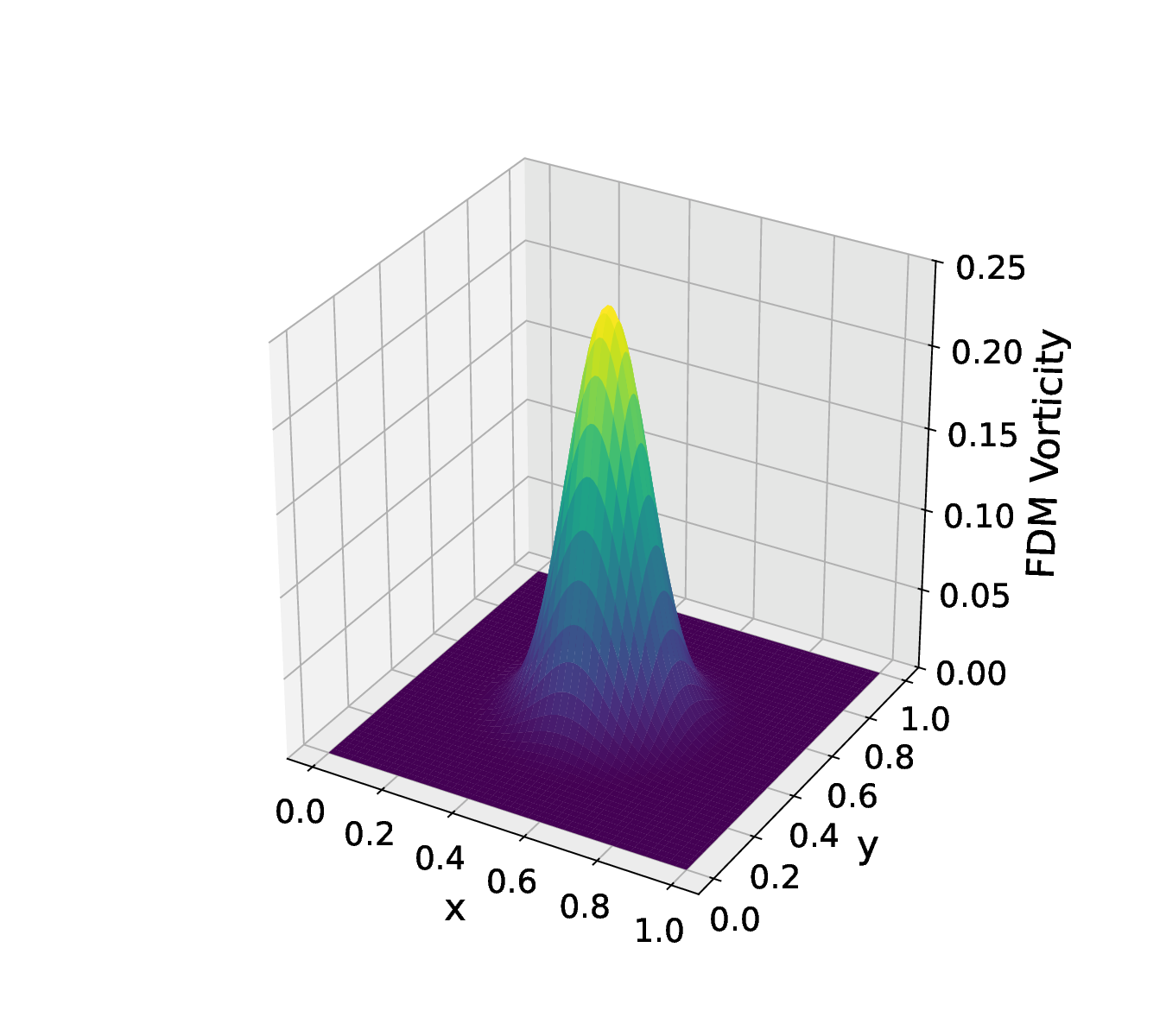}
        \caption{}
        \label{subfig:reducedMHD_finalvorticity_fdm}
    \end{subfigure}
    \begin{subfigure}{0.32\textwidth}
        \centering
        \includegraphics[width=\linewidth, trim={1.5cm 1.5cm 1.5cm 1.5cm}, clip]{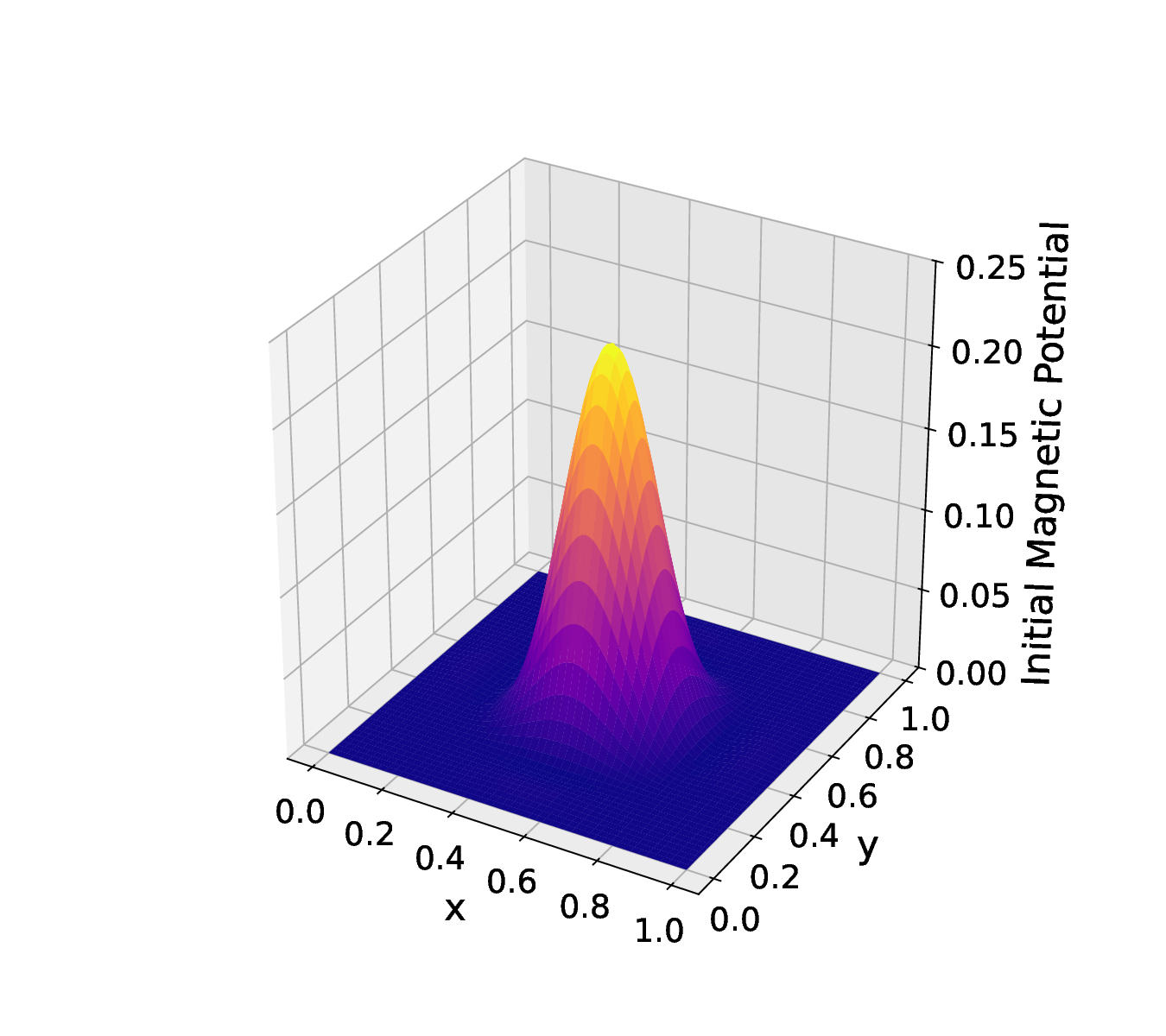}
        \caption{}
        \label{subfig:reducedMHD_initialmagpotential}
    \end{subfigure}
    \begin{subfigure}{0.32\textwidth}
        \centering
        \includegraphics[width=\linewidth, trim={1.5cm 1.5cm 1.5cm 1.5cm}, clip]{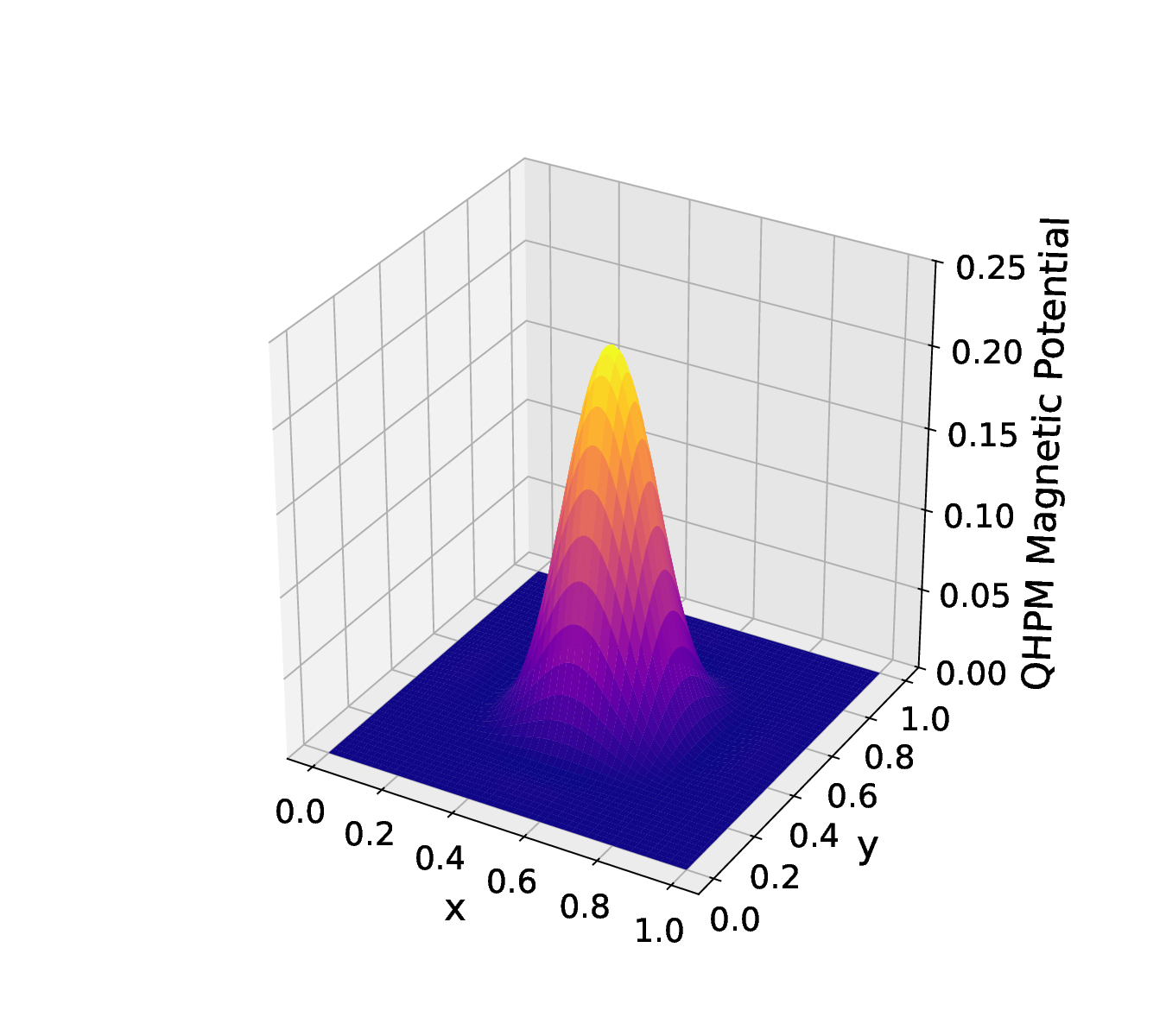}
        \caption{}
        \label{subfig:reducedMHD_finalmagpotential_qhpm}
    \end{subfigure}
    \begin{subfigure}{0.32\textwidth}
        \centering
        \includegraphics[width=\linewidth, trim={1.5cm 1.5cm 1.5cm 1.5cm}, clip]{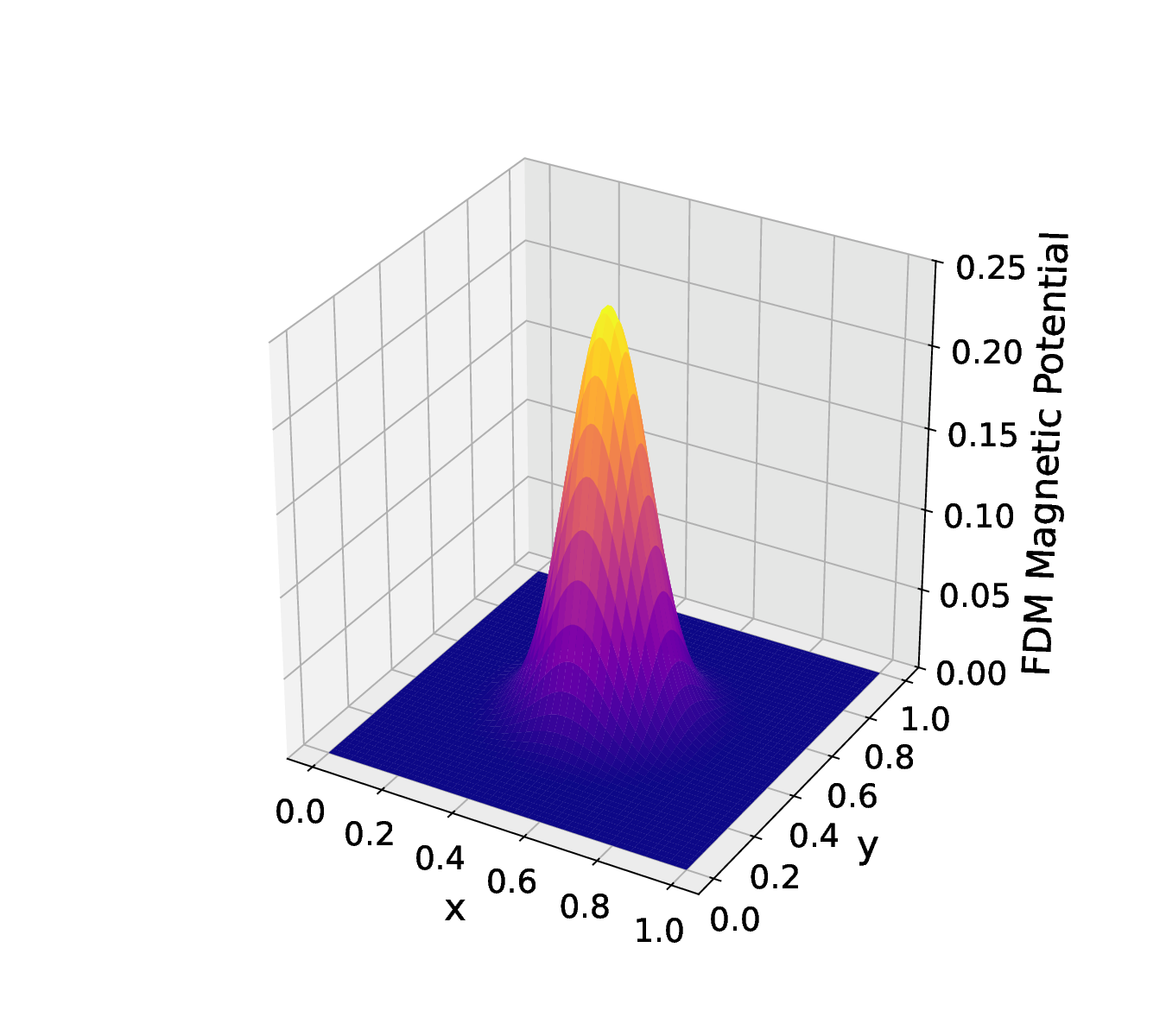}
        \caption{}
        \label{subfig:reducedMHD_finalmagpotential_fdm}
    \end{subfigure}
    \caption{Vorticity and streamfunction fields for the reduced MHD equations, including (a, d) initial condition at t = 0, (b, e) QHPM results at t = 0.001 s, and (c, f) FDM results at t = 0.001 s}
    \label{fig:reducedMHD_finalfields}
\end{figure}

\begin{figure}[h!]
    \centering
    \begin{subfigure}{0.45\textwidth}
        \centering
        \includegraphics[width=\linewidth, trim={0.25cm 1.5cm 0.25cm 1.5cm}, clip]{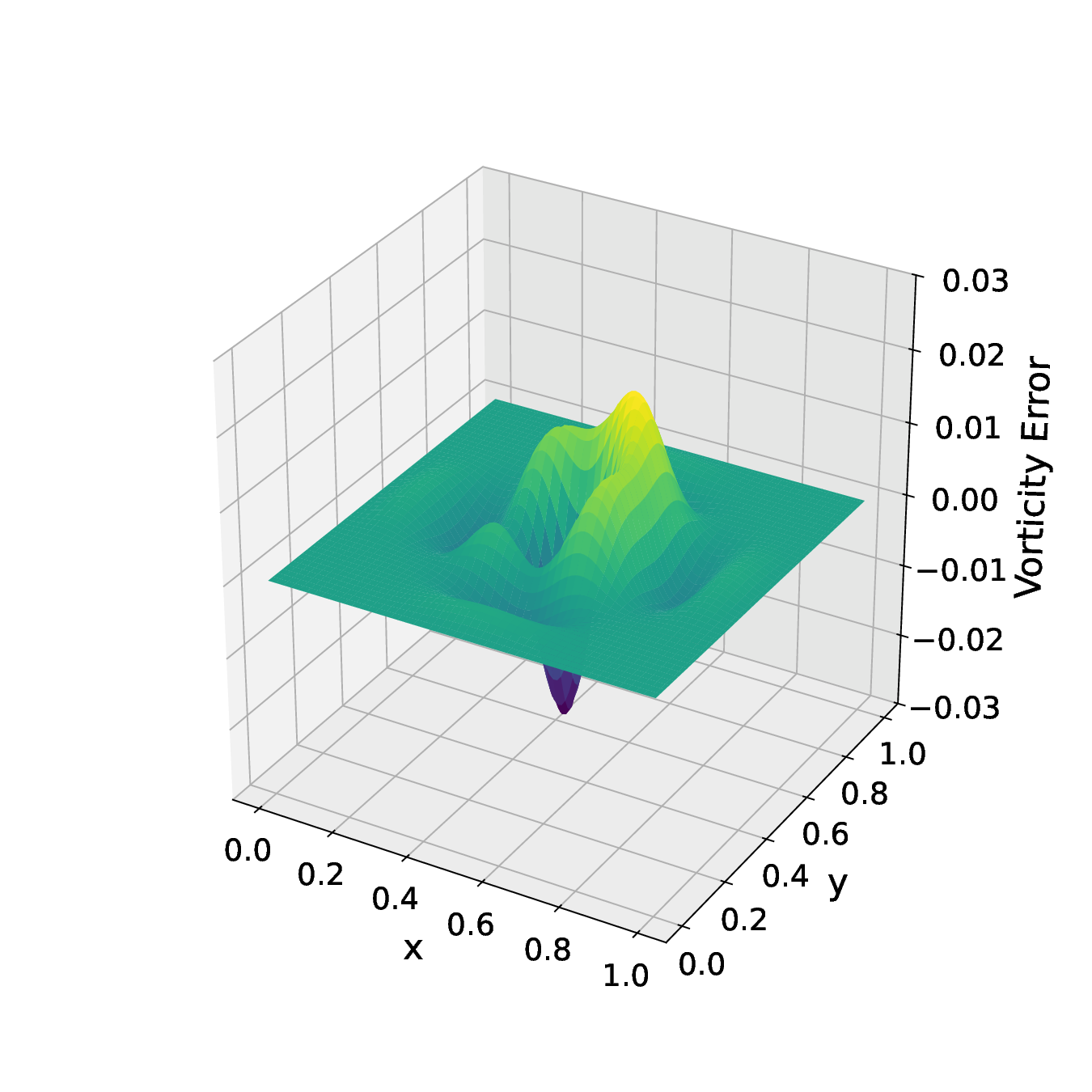}
        \caption{}
        \label{subfig:reducedMHD_vorticityerror}
    \end{subfigure}
    \begin{subfigure}{0.45\textwidth}
        \centering
        \includegraphics[width=\linewidth, trim={0.25cm 1.5cm 0.25cm 1.5cm}, clip]{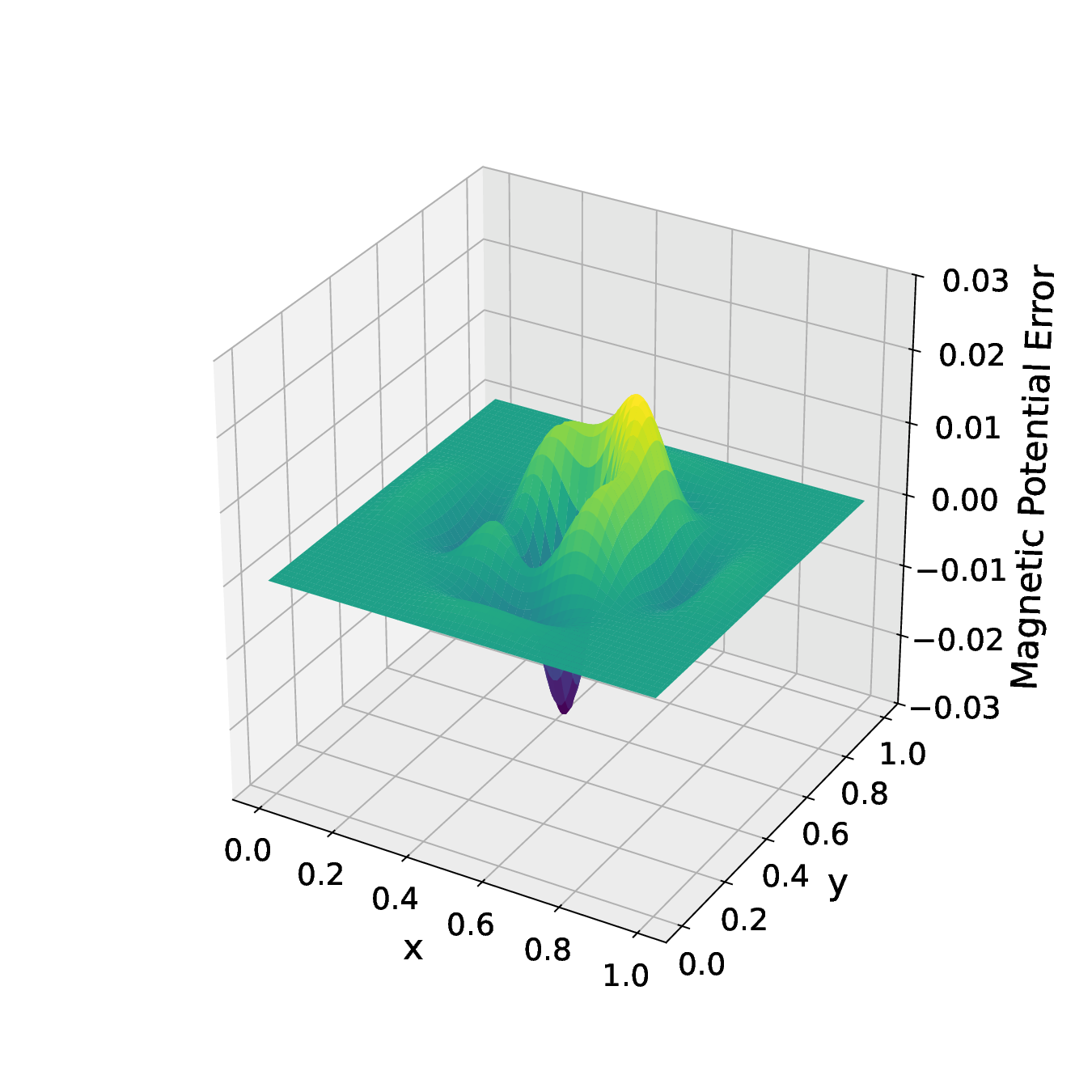}
        \caption{}
        \label{subfig:reducedMHD_magpotentialerror}
    \end{subfigure}
    \caption{(a) Vorticity and (b) magnetic potential errors of solving reduced MHD equations with QHPM relative to FDM}
    \label{fig:reducedMHD_errorfields}
\end{figure}

\begin{figure}[h!] \label{fig:NSvorticityfields}
    \centering
    \begin{subfigure}{0.49\textwidth}
        \centering
        \includegraphics[width=\linewidth]{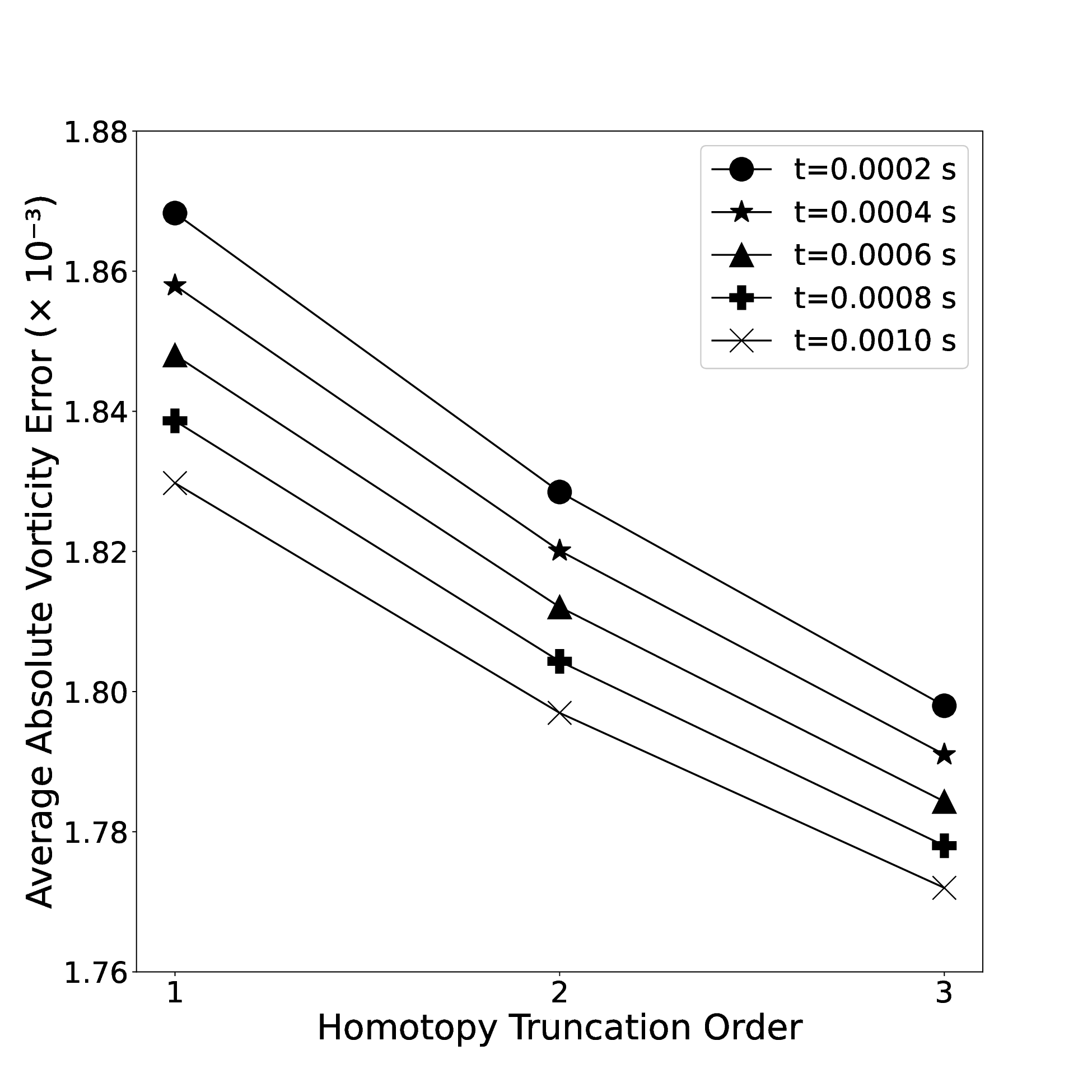}
        \caption{}
        \label{subfig:reducedMHD_averageabsolutevorticityerror}
    \end{subfigure}
    \begin{subfigure}{0.49\textwidth}
        \centering
        \includegraphics[width=\linewidth]{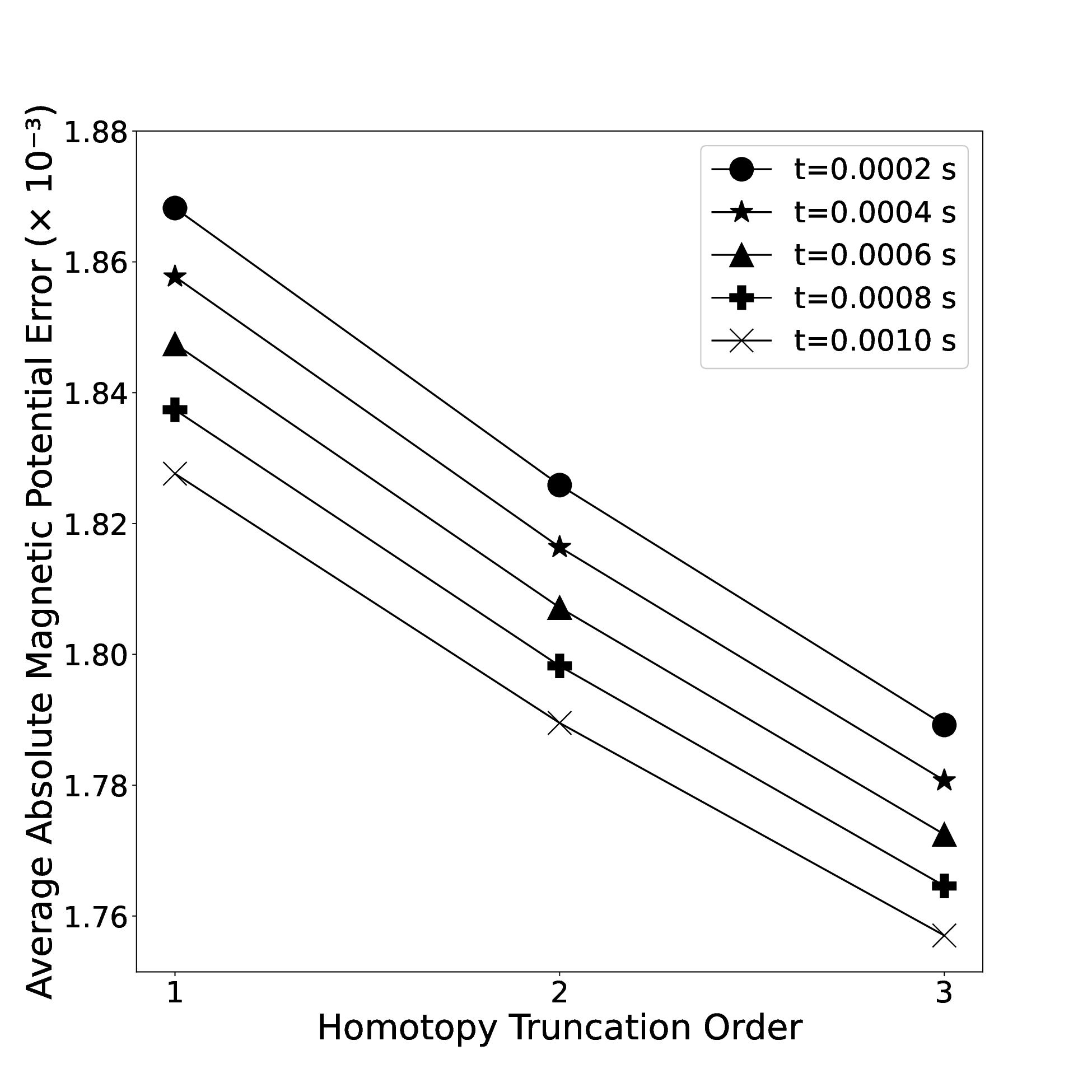}
        \caption{}
        \label{subfig:reducedMHD_averageabsolutemagpotentialerror}
    \end{subfigure}
    \caption{(a) Average vorticity and (b) magnetic potential errors of solving reduced MHD equations with QHPM relative to FDM for different homotopy orders}
\end{figure}

Given the initial conditions for $\omega$ and $\mu$ in Figures \ref{subfig:reducedMHD_initialvorticity} and \ref{subfig:reducedMHD_initialmagpotential}, it is observed that the final solution fields at $t = 0.001$ s are highly similar between QHPM and FDM.  The values of $\omega$ and $\mu$ obtained from QHPM in Figures \ref{subfig:reducedMHD_finalvorticity_qhpm} and \ref{subfig:reducedMHD_finalmagpotential_qhpm} closely approximate the respective solution fields in Figures \ref{subfig:reducedMHD_finalvorticity_fdm} and \ref{subfig:reducedMHD_finalmagpotential_fdm}.  According to Figures \ref{subfig:reducedMHD_vorticityerror} and \ref{subfig:reducedMHD_magpotentialerror}, the absolute vorticity and magnetic potential differences between QHPM and FDM at most grid points are less than 0.0125, which is the selected value of $\epsilon$. Out of all 10,000 grid points, 271 of them involve absolute differences which are larger than $\epsilon$, where the maximum absolute difference is 0.02804.

The convergence behavior of QHPM with respect to $m$ is also analyzed.  For each $t$, it is observed in Figures \ref{subfig:reducedMHD_averageabsolutevorticityerror} and \ref{subfig:reducedMHD_averageabsolutemagpotentialerror} that the average absolute errors for $\omega$ and $\mu$ decrease as $m$ increases from 1 to 2. This means that a larger value of $m$ results in the homotopy series more closely approximating the nonlinear solutions.

\section{Discussions and Conclusions} \label{sec:conclusions}
In this paper, QHPM is proposed as a new method to solve nonlinear PDEs on current quantum computers. The proposed QHPM improves the scalability of solving nonlinear PDEs by converting the original nonlinear problem to linear deformation equations with the homotopy perturbation method. The dimension of the Hilbert space remains unchanged during the linearization process.  The QHPM also improves scalability through which the linear deformation equations are solved with a VQS framework. The number of qubits can be decreased by the functional expansion strategy, and the parameterized circuit depth is reduced by utilizing a hardware-efficient ansatz.  For the vorticity transport and reduced MHD equations, it is demonstrated that QHPM results in solutions which closely approximate the solutions obtained from FDM.  This occurs when the homotopy order and VQS circuit depth are set to minimal values.  Furthermore, it is observed that the homotopy series converge towards nonlinear solutions. As the homotopy order increases, the average approximation errors between the QHPM and FDM solutions decrease.

Although the approximated solutions to the vorticity transport and reduced MHD equations are fairly accurate, the approximation error of low-order homotopy series will be significantly increased for highly nonlinear PDEs.  In highly nonlinear problems, the magnitudes of nonlinear correction terms are large relative to the initial guess term.  This results in homotopy series failing to converge towards nonlinear solutions. As future work, QHPM will be further evaluated for highly nonlinear PDEs with higher order homotopy orders.

The ability of QHPM to converge towards nonlinear PDE solutions also depends on discretization error.  Discretization error arises after the spatial domain of a solution field is discretized into a grid. Because the solutions to all linear deformation equations are discretized approximations of continuous solution fields, discretization error is propagated throughout the homotopy series. Future work will also involve analyzing the effect of discretization error on the convergence behavior of the homotopy series.

In both examples of this paper, the parametrized quantum circuit is sufficient for finding solutions to the linear deformation equations. However, solutions to several other nonlinear PDEs cannot be obtained with the same circuit architecture when the solutions do not overlap with the variational manifold of the circuit. Further efforts will be made to generalize QHPM for other nonlinear PDEs by incorporating an adaptive circuit architecture protocol similar to variational quantum algorithms such as ADAPT-VQE \cite{grimsley2019adaptive} and ADAPT-QAOA \cite{zhu2022adaptive}. The architecture of the quantum circuit architecture will be modified between time steps in order to increase the extent of Hilbert space exploration.

Future work will also focus on reducing the computational runtime of QHPM. This will be achieved in at least two ways. First, the VQS framework will be modified to directly solve nonhomogeneous linear differential equations.  This eliminates the need to decompose each linear deformation equation into two homogeneous linear PDEs.  As a result, the number of homotopy series terms is reduced from $2m+1$ to $m+1$.  Second, the time step size will be significantly increased so that fewer linear deformation equations are solved. The circuit parameter update rules based on Euler's method must be modified so that the evolutions of parameters do not deviate from the correct time evolutions.


\bibliographystyle{unsrt}

\bibliography{references}

\end{document}